\newcommand{\fracd}[2]{\ensuremath{\frac{\delta   #1}{\delta   #2}}}
\newcommand{\fracp}[2]{\ensuremath{\frac{\partial #1}{\partial #2}}}
\newcommand{\fract}[2]{\ensuremath{\frac{\mathrm{d} #1}{\mathrm{d} #2}}}
\newcommand{\dd}{\mathop{}\!\mathrm{d}} 
\newcommand{\ee}{\mathop{}\!\mathrm{e}}
\newcommand{\ii}{\mathop{}\!\mathrm{i}}
\newcommand{\half}{\frac{1}{2}}
\newcommand\bk{\boldsymbol{k}}
\newcommand\bX{\boldsymbol{X}}
\newcommand\bbX{\bar{\boldsymbol{X}}}
\newcommand\bV{\boldsymbol{V}}
\newcommand\bbV{\bar{\boldsymbol{V}}}
\newcommand\bx{\boldsymbol{x}}
\newcommand\bv{\boldsymbol{v}}
\newcommand\bE{\boldsymbol{E}}
\newcommand\bB{\boldsymbol{B}}
\newcommand\bA{\boldsymbol{A}}
\newcommand\bJ{\boldsymbol{J}}
\newcommand\bD{\boldsymbol{D}}
\newcommand\bH{\boldsymbol{H}}
\newcommand\cH{\mathcal{H}}
\newcommand\cR{\mathcal{R}}
\newcommand{\arr}[1]{{\bm{\mathsf{#1}}}}
\newcommand\arrA{{\bm{\mathsf{A}}}}
\newcommand\arrB{{\bm{\mathsf{B}}}}
\newcommand\arrD{{\bm{\mathsf{D}}}}
\newcommand\arrE{{\bm{\mathsf{E}}}}
\newcommand\arrF{{\bm{\mathsf{F}}}}
\newcommand\arrH{{\bm{\mathsf{H}}}}
\newcommand\arrG{{\bm{\mathsf{G}}}}
\newcommand\arrJ{{\bm{\mathsf{J}}}}
\newcommand\arrM{{\bm{\mathsf{M}}}}
\newcommand\arrP{{\bm{\mathsf{P}}}}
\newcommand\arrV{{\bm{\mathsf{V}}}}
\newcommand\arrX{{\bm{\mathsf{X}}}}
\newcommand\arrphi{{\bm{\mathsf{\phi}}}}
\newcommand\arrrho{{\bm{\mathsf{\rho}}}}
\newcommand\arrpsi{{\bm{\mathsf{\psi}}}}
\def\matd{\mathbb{d}}
\def\matC{\mathbb{C}}
\def\matD{\mathbb{D}}
\def\matG{\mathbb{G}}
\def\matH{\mathbb{H}}
\def\matI{\mathbb{I}}
\def\matO{\mathbb{O}}
\newcommand{\Xd}{\ensuremath{\dot{\bX}}}
\newcommand{\vpar}{\ensuremath{{v_\shortparallel}}}
\newcommand{\Vpar}{\ensuremath{{V_\shortparallel}}}
\newcommand{\Aex}{\ensuremath{\boldsymbol{A}_{\text{ext}}}}
\newcommand{\bex}{\ensuremath{\boldsymbol{b}_{\text{ext}}}}
\newcommand{\bexp}{\ensuremath{\boldsymbol{b}_{\text{ext}}(\bbX_p)}}
\newcommand{\Bex}{\ensuremath{\boldsymbol{B}_{\text{ext}}}}
\newcommand{\Btot}{\ensuremath{\boldsymbol{B}_{\text{tot}}}}
\newcommand{\Bpartot}{\ensuremath{B_{\shortparallel,\text{tot}}}}
\newcommand{\Bpartotp}{\ensuremath{B_{\shortparallel,\text{tot},p}}}
\newcommand{\Bexm}{\ensuremath{\left|\Bex \right|}}
\newcommand{\Bpar}{\ensuremath{B_\shortparallel}}
\newcommand{\Bperp}{\ensuremath{\bB_\perp}}
\newcommand{\Bperpm}{\ensuremath{\left|\bB_\perp \right|}}
\newcommand{\Eperp}{\ensuremath{\bE_\perp}}
\newcommand{\Eperpm}{\ensuremath{\left|\bE_\perp \right|}}
\newcommand{\Astar}{\ensuremath{\boldsymbol{A}^\ast}}
\newcommand{\Bstar}{\ensuremath{\boldsymbol{B}^\ast}}
\newcommand{\boldsymboltar}{\ensuremath{\boldsymbol{B}^\ast}}
\newcommand{\boldsymboltarp}{\ensuremath{\boldsymbol{B}^\ast(\bbX_p)}}
\newcommand{\Bpstar}{\ensuremath{{B_\shortparallel^\ast}}}
\newcommand{\Bpstarp}{\ensuremath{{B_{\shortparallel}^\ast(\bbX_p)}}}
\newcommand{\vgc}{\ensuremath{\boldsymbol{v}_{gc}}}
\newcommand{\agc}{\ensuremath{a_{gc}}}
\newcommand{\jgc}{\ensuremath{\boldsymbol{J}_{gc}}}
\newcommand{\jgcpar}{\ensuremath{J_{\shortparallel}}}
\newcommand{\rhogc}{\ensuremath{\rho_{gc}}}
\newcommand{\kpar}{k_\shortparallel}
\newcommand{\kperp}{k_\perp}
\newcommand{\tA}{\ensuremath{\delta{\boldsymbol{A}}}}
\newcommand{\tphi}{\ensuremath{\delta{\phi}}}
\newtheorem{proposition}{Proposition}
\newtheorem{remark}{Remark}
\begin{document}
\title{A geometric Particle-In-Cell discretization of the drift-kinetic and fully kinetic Vlasov-Maxwell equations}
\author{Guo Meng$^{1*}$, Katharina Kormann$^3$, Emil Poulsen$^1$, Eric Sonnendr\"ucker$^{1,2}$}

\affil{$^1$ Max Planck Institute for Plasma Physics, Garching, Germany}

\affil{$^2$ School of Computation Information and Technology, Technical University of Munich, Garching, Germany}

\affil{$^3$ Department of Mathematics, Ruhr University Bochum, Bochum, Germany}

\begin{abstract}
    In this paper, we extend the geometric Particle in Cell framework on dual grids to a gauge-free drift-kinetic Vlasov--Maxwell model and its coupling with the fully kinetic model.
  We derive a discrete action principle on dual grids for our drift-kinetic model, such that the dynamical system involves only the electric and magnetic fields and not the potentials as most drift-kinetic and gyrokinetic models do. This yields a macroscopic Maxwell equation including polarization and magnetization terms that can be coupled straightforwardly with a fully kinetic model.
\end{abstract}

\renewcommand{\thefootnote}{*} %
\footnotetext{Corresponding author: guo.meng@ipp.mpg.de}

\section{Introduction}
Many plasma physics models have been proved to possess a Hamiltonian structure \cite{morrison1998hamiltonian, morrison2017structure}. Invariants, like the Hamiltonian or Casimir invariants like the Gauss law and $\text{div} B = 0$ emerge naturally within this framework. Adequate conservation of these quantities has been proven to be essential for well-behaving numerical solutions. General structure preserving numerical methods aim at preserving one or more of these invariants.
Geometric numerical methods achieve this by discretizing the Hamiltonian structure, i.e. Poisson bracket and Hamiltonian or an action principle, rather than the resulting partial differential equations (PDEs). This approach approximates the infinite dimensional original Hamiltonian system by a finite-dimensional Hamiltonian system and in this way guarantees the conservation of the appropriately discretized invariants.
This can be achieved naturally by using compatible discretizations of the fields following a discrete de Rham sequence and a particle description of the Vlasov equation. Compatible discretizations of the fields can be achieved with appropriate choices of the Finite Element spaces (cf.~\cite{kraus2017gempic,kormann2021energy,perse2021geometric,campos2022variational}). However, these have the drawback that they involve the inversion of a mass matrix at every time step even for explicit schemes, which is not optimal for high-performance computing (HPC) implementations. For this reason, a new geometric discretization related to Mimetic Finite Differences was developed in \cite{Kormann2024A-Dual-Grid}.

The aim of this paper is to extend the geometric particle in cell (PIC) framework to gyrokinetic models in the Zero Larmor Radius limit, which corresponds to the well-known drift-kinetic model \cite{kulsrud1983mhd}, and couple them seamlessly with the fully kinetic model. This is achieved best by considering gauge-free gyrokinetic models, which means that the corresponding dynamical system can be expressed only using the electric and magnetic fields and not the potentials. The final models are derived from a Lagrangian possibly involving several species of drift-kinetic particles as well as several species of fully kinetic particles. We can of course have all particles drift-kinetic or all particles fully kinetic. We call hybrid the model where both particle types are present.

Starting from the gyrokinetic Lagrangian of Burby and Brizard \cite{burby2019gauge-free}  in the Zero Larmor Radius limit, we introduce a Lagrangian coupling drift-kinetic electrons with fully kinetic ions at the continuous level and then propose a discretized Lagrangian based on the Mimetic Finite Difference framework on dual grids \cite{Kormann2024A-Dual-Grid} and the PIC method, and derive from it the equations of motions for the PIC markers as well as the discrete generalized Maxwell equations involving the polarization and magnetization terms coming from the drift-kinetic particles. The new method is verified by checking the wave spectrum for only fully kinetic, only drift-kinetic and coupled models. This comparison also highlights the differences between these models. Notice however that we don't add any approximation beyond the gyrokinetic particle Lagrangian. This means in particular that light waves and compressional Alfv\'en waves are still present in our drift-kinetic models. Addressing Darwin like and quasi-neutrality assumptions to remove these high frequency waves will be the purpose of future work.

Kinetic equations for a collisionless magnetized plasma in the low-frequency limit and the derivation of MHD equations are discussed in Ref. \cite{kulsrud1983mhd}.  In recent years, kinetic codes have been developed to model the edge and scrape-off layer (SOL) regions of magnetic confinement fusion devices \cite{michels2021geneX,boesl2019picls,dorf2016cogent}, as the validity of fluid models in these regions remains an open question. For the fully nonlinear collision operator \cite{hirvijoki2021structure,lu2024fullfcoll} required in edge plasma physics, a fully kinetic treatment of ions provides a natural and rigorous solution, while a drift-kinetic model for electrons is generally sufficient for capturing edge plasma behavior \cite{chankin2012kipp}. Moreover, the fully kinetic method facilitates coupling between ion plasma and neutral particles, the latter of which can not be described by gyrokinetic theory at all.

Notably, Refs. \cite{YChen2009VlasovDrift} and \cite{chen2019gefiEB} introduce a hybrid kinetic model using the E\&B formulation directly. As discussed in \cite{YChen2009VlasovDrift,rosen2022EB}, the compressional Alfv\'en mode imposes the most stringent constraint on the time step when $k_\theta\rho_i \sim 1$, advanced numerical methods such as the implicit methods \cite{lu2021implicit} are necessary for overcoming this constraint. The gyrokinetic E\&B models developed in previous works for low-frequency electromagnetic fluctuations \cite{chen2021gkEB}, kinetic Alfv\'en waves in tokamak plasmas \cite{rosen2022EB} and hybrid model \cite{YChen2009VlasovDrift, chen2019gefiEB} differ from our approach in both the formulations and the focus on geometric numerical methods for discretizing the equations. A key feature of our approach is that we derive both the field and particle equations of motion from a Lagrangian coupling drift-kinetic electrons with fully kinetic ions at the continuous level, ensuring numerical consistency and structure preserving in the discretized space. This contrasts with the treatment that separately discretizes the physical field and particle equations, for which maintaining the same conservation laws as in the continuous Lagrangian can be challenging.
In Ref. \cite{burby2019gauge-free} Burby \& Brizard derive a new gauge-free formulation of gyrokinetic theory but do not consider its discretization. We start from their model in the Zero Larmor radius limit and show how a structure-preserving numerical scheme can be derived from it. In our work we also consider coupled drift-kinetic and fully kinetic models. Specifically, we ensure that our formulation naturally leads to a self-consistent coupling between the drift-kinetic and fully kinetic models within a hybrid framework. Furthermore, our model allows easy switching between kinetic models for different species and straightforward coupling of various kinetic species, which offers new advantages for addressing multiscale physics in plasma systems. 

The paper is organized as follows. In Section \ref{sec:DKmodel} we introduce the gauge-free drift-kinetic model that we are going to discretize. In Section \ref{sec:DKFK} we introduce the coupling of our drift-kinetic and the fully kinetic models. In Section \ref{sec:mimeticFD}
we present the new geometric PIC discretization. 
In Section \ref{sec:timeDisc}, we describe the time discretization and time-stepping methods for the Vlasov-Maxwell system. In Section \ref{sec:dispersion}, we derive the dispersion relation for the drift-kinetic model in a slab geometry. In Section \ref{sec:results} we verify that our models agree with the dispersion relations and highlight the different waves that exist for each model. Finally, in Section \ref{sec:conclusion} we summarize the results and provide some outlook for future work.

\section{\label{sec:DKmodel}A gauge-free drift-kinetic model}\label{sec:GaugeFreeDK}
\subsection{Action principle}\label{sec:ActionDK}

We consider here the gauge-free formulation of the gyrokinetic model introduced in \cite{burby2019gauge-free} in the zero Larmor radius (or long wavelenth) limit, namely the drift-kinetic model. A dispersion relation for this model has been derived in \cite{Zonta2021Dispersion}.

Let us denote by $\Bex$ the external equilibrium magnetic field, by $\bE$ and $\bB$ the perturbed electric and magnetic fields, by $\phi$ the electrostatic potential and by $q$ the charge of the particle species. We define the guiding center single particle Hamiltonian by $H=q\phi + K$ and we split the kinetic energy $K$ into three parts: the term $K_0$ independent of $\bE$ and $\bB$, $K_1$ depending linearly on $\bE$ and $\bB$, and $K_2$ depending quadratically on $\bE$ and $\bB$, i.e.,
\begin{align}
K_0 &= \frac{1}{2}m \vpar^2 + \mu \Bexm, \label{def:K0}\\
K_1 &= \mu \bex \cdot \bB,  \label{def:K1}\\
K_2 &=  \left( \mu \Bexm - m \vpar^2 \right) \frac{ \Bperpm^2 }{2 \Bexm^2} - \frac{m \Eperpm^2}{2 \Bexm^2} - \frac{m \vpar \bE \times \bex \cdot \bB}{ \Bexm^2}, \label{def:K2}
\end{align}
where $\vpar$ is the velocity parallel to external magnetic field, $\vpar = \bv \cdot \Bex$. Here, we have introduced the notation $\Bperp = (Id - \bex \bex^\top)\bB$, $\Eperp = (Id - \bex \bex^\top)\bE$ and $\bex = \frac{\Bex}{\Bexm}$.
$K_0$ is the kinetic energy of the guiding centers, $K_1$ and $K_2$ correspond to the zero Larmor radius limit of the model derived by Burby and Brizard \cite{burby2019gauge-free} in the case of a uniform $\Bex$. The term $K_2$ also appears in Eq.~(54) in Brizard and Hahm \cite{brizard2007foundations}.
 Let us observe that
\begin{equation}\label{eq:K0pK1}
	K_0+ K_1 =  \frac{1}{2}m \vpar^2 + \mu \bex\cdot ( \Bex + \bB).
\end{equation}

Let us also introduce the following notations
\begin{equation}\label{eq:Astar}
	\Astar= \bA + \Aex +\frac mq \vpar \bex,~~ \Bstar= \nabla\times\Astar = \bB +\Bex + \frac mq \vpar \nabla\times\bex, 
	~~ \Bpstar = \Bstar\cdot \bex.
\end{equation}

Denoting  the guiding center distribution function by $f$ and using the guiding center volume element $\Bpstar \dd\bx\dd \vpar  \dd\mu$ expressed at the initial time, the  Lagrangian is defined as, assuming implicitly a sum over all the particle species in the particle part
\begin{multline}\label{eq:action}
	L(\bA(t),\dot{\bA}(t), \phi(t), \bX(t),\Xd(t),\vpar(t), \mu; f_0) \\
	= \int  q \left(\Astar(t,\bX(t),\Vpar(t),\bA(t,\bX(t)))  \cdot \Xd(t) - q \phi(t,\bX(t))\right) f_0 \Bpstar_0 \dd \bx_0  \dd \vpar_0 \dd \mu \\
	- \int \left( K_0(\bX(t), \Vpar(t),\mu)+ (K_1+K_2)(\bX(t), \Vpar(t),\bE(t, \bX(t)), \bB(t,\bX(t));\mu)  \right)f_0 \Bpstar_0 \dd \bx_0 \dd \vpar_0 \dd \mu \\
	 + \int \frac{1}{2} \left( \varepsilon_0 \left| \bE(t,\bx)\right|^2 - \frac{1}{\mu_0} \left| \Bex(\bx) + \bB(t,\bx)\right|^2 \right) \dd \bx 
	\end{multline}
where $\bE = - \fracp{\bA}{t} - \nabla\phi$ and $\bB = \nabla\times\bA$. 

\subsection{The dynamical system}
The equations of motion can be obtained by setting to zero the variations with respect to each of the time-dependent variables in the Lagrangian.
First, the variations with respect to $\bX$ classically yield the Euler--Lagrange equation 
$\fracp{}{t} \fracp{L}{\Xd} = \fracp{L}{\bX}$.
As $K$ depends also on $\bX$ through $\bE$ and $\bB$ we shall denote by 
$\fract{K}{\bX}=\nabla K + \nabla\bE \fracd{K}{\bE} + \nabla\bB \fracd{K}{\bB}$, the total derivative, where $\nabla K$ is the partial derivative with respect to $\bX$. 
In our case we have $ \fracp{L}{\Xd} = q \Astar$ and $\fracp{L}{\bX} = q((\nabla\Astar) \Xd - \nabla\phi)-\fract{K}{\bX}$, so that the Euler--Lagrange equation becomes
\begin{equation}\label{eq:EL-X-0}
	q\left(\fracp{\bA}{t} + \nabla\phi + \frac mq \dot{\Vpar}\bex - \Xd\times\nabla\times\Astar\right) = -\fract{K}{\bX}
\end{equation}
or equivalently
\begin{equation}\label{eq:EL-X}
	m \dot{\Vpar}\bex - q \Xd\times\Bstar = q\bE - \fract{K}{\bX}.
\end{equation}
Now for the Euler--Lagrange equation in $\Vpar$, we have $ \fracp{L}{\dot{\Vpar}} = 0$ and
$ \fracp{L}{\Vpar} = q\fracp{\Astar}{\Vpar}\cdot\Xd - \fracp{K}{\Vpar}$ so that
\begin{equation}\label{eq:EL-Vpar}
	 m \bex\cdot\Xd  =\fracp{K}{\Vpar}.
\end{equation}

Let us now solve Eqs. \eqref{eq:EL-X}--\eqref{eq:EL-Vpar} for $\Xd$ and $\dot{\Vpar}$. We first take the cross product of Eq. \eqref{eq:EL-X} with $\bex$
$$(\Bstar\cdot\bex)\Xd - (\bex\cdot\Xd)\Bstar =  (\bE - \frac 1q \fract{K}{\bX})\times\bex$$
which yields the following equation for $\Xd$ 
\begin{equation}\label{eq:XdotAbs}
 \Xd = \frac{1}{\Bpstar}\left(\frac 1 m \fracp{K}{\Vpar}\Bstar +  (\bE - \frac 1 q \fract{K}{\bX})\times\bex \right) =: \vgc
\end{equation}
where this expression defines the guiding center velocity $\vgc$.
And taking the dot product of Eq. \eqref{eq:EL-X} with $\Bstar$ yields
\begin{equation}\label{eq:VpdotAbs}
	\dot{\Vpar} =  \frac{1}{\Bpstar}\left(\frac q m \bE - \frac 1 m \fract{K}{\bX}\right) \cdot\Bstar =: \agc,
\end{equation}
where this expression defines the guiding center parallel acceleration $\agc$.

\begin{proposition}
	The density $\Bpstar$ verifies the Liouville equation for the characteristics Eqs. \eqref{eq:XdotAbs}--\eqref{eq:VpdotAbs}
	\begin{equation}\label{eq:liouville}
	\fracp{\Bpstar}{t} + \nabla\cdot (\Bpstar \Xd ) + \fracp{\Bpstar \dot{\Vpar} }{\Vpar} =0,
	\end{equation}
	which implies that $\Bpstar f$ is conserved along these characteristics for any smooth function $f$.
\end{proposition}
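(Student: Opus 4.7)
The plan is to verify \eqref{eq:liouville} by direct calculation of its three terms and to observe a threefold cancellation, and then to deduce the conservation statement as an immediate corollary. Three structural identities carry the weight: $\nabla\cdot\Bstar=0$ (since $\Bstar=\nabla\times\Astar$), $\fracp{\Bstar}{\Vpar}=\frac{m}{q}\nabla\times\bex$ (from the definition of $\Astar$ in \eqref{eq:Astar}), and $\nabla\times\fract{K}{\bX}=0$, the last one following because the total derivative $\fract{K}{\bX}=\nabla K+\nabla\bE\,\fracd{K}{\bE}+\nabla\bB\,\fracd{K}{\bB}$ is precisely the spatial gradient of the scalar map $\bX\mapsto K(\bX,\Vpar,\bE(\bX),\bB(\bX))$ and is therefore curl-free.

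First I would compute $\fracp{\Bpstar}{t}$: at a fixed phase-space point the only time-dependent contribution comes through $\bB=\nabla\times\bA$, and $\bE=-\fracp{\bA}{t}-\nabla\phi$ then gives $\fracp{\Bpstar}{t}=-(\nabla\times\bE)\cdot\bex$. Next, substituting \eqref{eq:XdotAbs} and using the vector identity $\nabla\cdot(\boldsymbol{u}\times\bex)=(\nabla\times\boldsymbol{u})\cdot\bex-\boldsymbol{u}\cdot(\nabla\times\bex)$ together with the three identities above yields
\begin{equation}
\nabla\cdot(\Bpstar\Xd)=\frac{1}{m}\fract{}{\bX}\!\left(\fracp{K}{\Vpar}\right)\!\cdot\Bstar+(\nabla\times\bE)\cdot\bex-\left(\bE-\tfrac{1}{q}\fract{K}{\bX}\right)\!\cdot(\nabla\times\bex).
\end{equation}
Differentiating \eqref{eq:VpdotAbs} in $\Vpar$, using $\fracp{\Bstar}{\Vpar}=\frac{m}{q}\nabla\times\bex$ and the Schwarz identity $\fracp{}{\Vpar}\fract{K}{\bX}=\fract{}{\bX}\fracp{K}{\Vpar}$, produces exactly the negatives of the first and third pieces above; summing the three contributions therefore gives zero.

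The only real subtlety is bookkeeping: the partial-versus-total $\bX$-derivative hidden in $\fract{K}{\bX}$ has to be tracked consistently, since the same chain-rule terms must appear inside $\nabla(\fracp{K}{\Vpar})$ in the divergence and inside $\fracp{}{\Vpar}\fract{K}{\bX}$ in the parallel-velocity derivative for the $\Bstar$ piece to cancel, which is exactly where the Schwarz identity enters. A conceptual alternative that bypasses this calculation is to notice that \eqref{eq:XdotAbs}--\eqref{eq:VpdotAbs} form a Hamiltonian flow in the non-canonical coordinates $(\bX,\Vpar)$ with symplectic two-form $\omega=\dd(q\Astar\cdot\dd\bX)$; a short computation gives $\tfrac{1}{2}\omega\wedge\omega=qm\Bpstar\,\dd^3\bX\,\dd\Vpar$, and \eqref{eq:liouville} is merely preservation of this volume by the Hamiltonian flow.

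For the corollary, once \eqref{eq:liouville} is established, multiplying by any smooth $f(t,\bX,\Vpar)$ and applying the product rule gives
\begin{equation}
\fracp{\Bpstar f}{t}+\nabla\cdot(\Bpstar f\Xd)+\fracp{\Bpstar f\dot{\Vpar}}{\Vpar}=\Bpstar\!\left(\fracp{f}{t}+\Xd\cdot\nabla f+\dot{\Vpar}\fracp{f}{\Vpar}\right),
\end{equation}
so $f$ being constant along the characteristics is equivalent to $\Bpstar f$ satisfying the conservative transport equation, which is the stated conservation.
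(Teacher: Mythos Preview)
Your argument is correct and follows essentially the same direct-computation route as the paper: expand the three terms of \eqref{eq:liouville} using $\nabla\cdot\Bstar=0$, the vector identity for $\nabla\cdot(\boldsymbol{u}\times\bex)$, and $\fracp{\Bstar}{\Vpar}=\tfrac{m}{q}\nabla\times\bex$, then observe the pairwise cancellations (the paper is slightly terser, leaving the curl-free nature of $\fract{K}{\bX}$ and the Schwarz identity implicit, whereas you make them explicit). Your supplementary remark that \eqref{eq:liouville} is the preservation of the Liouville volume $\tfrac{1}{2}\omega\wedge\omega$ under the Hamiltonian flow is a nice conceptual shortcut that the paper does not mention.
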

\begin{proof}
	Using the definitions in Eq. \eqref{eq:Astar} and that $\nabla\cdot\Bstar=0$
	we compute
	\begin{align}
		\fracp{\Bpstar}{t} &= \fracp{\bB}{t}\cdot\bex, \\
		\nabla\cdot (\Bpstar \Xd) &=  \frac 1 m \nabla\cdot(\fracp{K}{\Vpar}\Bstar) + \nabla\cdot(\bE\times\bex) - \frac 1q \nabla\cdot(\fract{K}{\bX} \times\bex), \nonumber\\ 
		&= \frac 1 m \nabla\fracp{K}{\Vpar}\cdot \Bstar + \bex\cdot (\nabla\times\bE) -\bE\cdot(\nabla\times\bex)  + \frac 1q \fract{K}{\bX}\cdot(\nabla\times\bex), \\ 
		\fracp{\Bpstar \dot{\Vpar} }{\Vpar} &= (\bE - \frac 1 q \fract{K}{\bX}) \cdot \nabla\times\bex -\frac 1m \nabla \fracp{K}{\Vpar}\cdot\Bstar.
	\end{align}
We get the result by summing these three equations using Faraday's law $\fracp{\bB}{t}  + \nabla\times\bE=0$ which follows from the definition of the potentials.	

\end{proof}

Let us observe that the potentials appear in the Lagrangian through the fields $\bE = -\fracp{\bA}{t}-\nabla\phi$
and $\bB =\nabla\times\bA$ from which it follows that
\begin{align}
\int \fracd{\mathcal{F}[\bE]}{\bA}\cdot\tA \dd \bx &= -\int \fracd{\mathcal{F}}{\bE} \cdot \fracp{\tA}{t}\dd \bx  , \\
\int \fracd{\mathcal{F}[\bE]}{\phi}\cdot\tphi \dd \bx &= - \int \fracd{\mathcal{F}}{\bE} \cdot \nabla\tphi \dd \bx , \\
\int \fracd{\mathcal{F}[\bB]}{\bA}\cdot \tA \dd \bx &= \int \fracd{\mathcal{F}}{\bB} \cdot \nabla\times\tA \dd \bx .
\end{align}

We can now compute the variations with respect to $\bA$ (integrating by parts in time):
\begin{multline}\label{eq:var-A}
	0= \int  q \tA  \cdot \Xd(t)  f_0 \Bpstar_0\dd \bx_0  \dd \vpar_0 \dd \mu \dd t
	 - \int \left(\fracp{}{t}\fracd{ K}{\bE}\cdot \tA  +\fracd{K}{\bB}\cdot\nabla\times\tA\right) f_0 \Bpstar_0\dd \bx_0  \dd \vpar_0 \dd \mu \dd t\\
	+ \int  \left( \varepsilon_0  \fracp{\bE}{t} \cdot \tA - \frac{1}{\mu_0} ( \Bex(\bx) + \bB(t,\bx)) \cdot\nabla\times \tA \right) \dd \bx \dd t
\end{multline}

Hence, making the change of variables $(\bx, \vpar)= (\bX, \Vpar) (t;\bx_0,\vpar_0)$ using
that $f\Bpstar$ is conserved by this transformation due to Liouville's theorem and $\vgc=\Xd$, we can define the guiding center current
\begin{equation}\label{eq:jgy}
\jgc = q \int   \vgc  f \Bpstar \dd \vpar \dd \mu ,
\end{equation}

as well as the polarization $ \mathbf{P}$ and the magnetization $\mathbf{M}$
\begin{align}
	\mathbf{P} &= -\int \fracd{ K}{\bE}  f \Bpstar  \dd \vpar \dd \mu, \label{def:P}\\
	\mathbf{M} &=  -\int \fracd{K}{\bB} f \Bpstar  \dd \vpar \dd \mu . \label{def:M}
\end{align}
On the other hand, we define the displacement field $\bD$ and the magnetic field intensity $\bH$ by
\begin{align}
	\bD &= \epsilon_0 \bE + \mathbf{P}, \label{def:D}\\
	\bH &= \frac{1}{\mu_0}(\Bex+\bB) - \mathbf{M} \label{def:H}
\end{align}
to finally be able to write the weak form of Ampere's law
\begin{equation}\label{eq:ampere-weak}
	 \int  \left( \fracp{\bD}{t} \cdot \tA -  \bH \cdot\nabla\times \tA \right) \dd \bx 
	= - \int \jgc \cdot \tA \dd \bx \quad\forall \tA .
\end{equation}

Let us finally set the variation with respect to $\phi$ to zero:
\begin{equation}\label{eq:var-rho}	
   \int \fracd{K}{\bE} \cdot \nabla\tphi  f \Bpstar \dd \bx  \dd \vpar \dd \mu 
	- \epsilon_0 \int \bE\cdot\nabla\tphi \dd \bx =\int  q \tphi   f \Bpstar \dd \bx  \dd \vpar \dd \mu \quad\forall \tphi.
\end{equation}
Defining the guiding center charge density $\rhogc$ by 
\begin{equation}\label{eq:rhogc}
 \rhogc  = q \int   f \Bpstar   \dd \vpar \dd \mu
\end{equation}
this becomes the weak Poisson equation using the definition Eq. \eqref{def:D} of $\bD$
\begin{equation}\label{eq:poisson-weak}
	-\int  \bD\cdot\nabla\tphi \dd \bx 
   = \int \rhogc  \tphi \dd \bx \quad\forall \tphi .
\end{equation}

In addition, the definition of the fields $\bE$ and $\bB$ yields Faraday's law $\fracp{\bB}{t}+\nabla\times\bE=0$ and $\nabla\cdot\bB =0$ so that we get the following guiding center Maxwell equations (in strong form)
\begin{align}
\fracp{\bD}{t} - \nabla \times \bH &=- \jgc , \label{eq:ampere}\\
\fracp{\bB}{t} + \nabla \times \bE &= 0,\label{eq:faraday}\\
\nabla \cdot \bD &= \rhogc , \label{eq:Gauss}\\
\nabla \cdot  \bB &= 0.
\end{align}

These are coupled with the equations of motion \eqref{eq:XdotAbs}--\eqref{eq:VpdotAbs} that we recall for convenience
\begin{equation}\label{eq:Xdot-2}
	\fract{\bX}{t} = \frac{1}{\Bpstar}\left(\frac 1 m \fracp{K}{\Vpar}\Bstar +  (\bE - \frac 1 q \fract{K}{\bX})\times\bex \right) = \vgc,
   \end{equation}
\begin{equation}\label{eq:Vdot-2}
	   \fract{\Vpar}{t} =  \frac{1}{\Bpstar}\left(\frac q m \bE - \frac 1 m \fract{K}{\bX}\right) \cdot\Bstar = \agc .
\end{equation}

\begin{remark}
	Integrating the Liouville equation \eqref{eq:liouville} over $\vpar$ and $\mu$ yields the guiding center continuity equation
	\begin{equation}\label{eq:continuity-dk}
	\fracp{\rhogc}{t}+ \nabla\cdot\jgc =0.
	\end{equation}
	This implies in particular that Eq. \eqref{eq:Gauss} remains satisfied at all times provided it is satisfied at the initial time and Ampere's equation \eqref{eq:ampere} is satisfied. This is similar to the classical Vlasov-Maxwell equations.
\end{remark}

The dynamical system of Eqs. \eqref{eq:ampere}--\eqref{eq:Vdot-2} forms a noncanonical Hamiltonian system coupling particles and fields.
Due to the time derivative in the definition of $\bE$ from $\bA$ the Hamiltonian associated to the Lagrangian Eq. \eqref{eq:action} is
\begin{align}\label{eq:ham}
	\cH &= \frac{ \varepsilon_0}{2} \int \left| \bE\right|^2 \, \dd \bx+ \frac{1}{2\mu_0} \int \left| \bB + \Bex\right|^2 \, \dd \bx + \int  \left( K -  \bE \cdot \frac{\delta K}{\delta \bE} \right)
	f_0 \Bpstar_0 \dd \bx_0 \dd \vpar_0 \dd \mu  \nonumber \\
	&= \frac{ \varepsilon_0}{2} \int \left| \bE\right|^2 \, \dd \bx+  \int \bE\cdot \mathbf{P} \dd \bx + \frac{1}{2\mu_0} \int \left| \bB + \Bex\right|^2 \, \dd \bx 
	+ \int  K f_0 \Bpstar_0 \dd \bx_0 \dd \vpar_0 \dd \mu 
\end{align}

\subsection{\label{sec:linearized} Linearized polarization and magnetization}

Let us first compute the explicit expressions of the polarization $ \mathbf{P}$ and magnetization $ \mathbf{M}$ for the specific kinetic energy defined by Eqs. \eqref{def:K0}--\eqref{def:K2}.
We can compute $ \mathbf{P}$  using the definition of Eq. \eqref{def:P} and the expression of $K_2$, given in Eq. \eqref{def:K2}, the only part of $K$ depending on $\bE$
\begin{equation}\label{eq:Pexplicit}
	\mathbf{P} (t,\bx)= -\int \fracd{ K}{\bE}  f \Bpstar  \dd \vpar \dd \mu = \frac{m}{\Bexm^2} \int (\bE_\perp + \vpar \bex\times \bB)  f \Bpstar  \dd \vpar \dd \mu.
\end{equation}
The magnetization $\mathbf{M}$ defined Eq. \eqref{def:M} depends both on $K_1$ and $K_2$. 
It reads
\begin{multline}\label{eq:Mexplicit}
	\mathbf{M} (t,\bx)= -\int \fracd{ K}{\bB}  f \Bpstar  \dd \vpar \dd \mu \\
	= \int \left[ -\mu\bex - \frac{\mu\bB_\perp}{\Bexm} +
		\frac{m}{\Bexm^2}  (\vpar^2\bB_\perp + \vpar \bE\times\bex )  \right]f \Bpstar  \dd \vpar \dd \mu.
\end{multline}
These terms depend on the distribution function, which needs to be calculated from the particles. Moreover, keeping the $K_2$ in the Lagrangian will yield additional contributions to the particles equations of motion. 

For many problems the fully nonlinear polarization and magnetization is not needed, and we can linearize the higher order quadratic term $K_2$ keeping the contribution from $K_1$ nonlinear 
by replacing $f$ by the local Maxwellian $F_M$  with density $n_{M}(\bx)$, parallel velocity  $u_{\shortparallel}$
and thermal velocity $v_{th}=\sqrt{k_B T/m}$, where $k_B$ is the Boltzmann constant and $T$ the temperature of the species:
\begin{equation}\label{eq:Maxwellian}
	f_{M}(\bx,v_\shortparallel,\mu)=\frac{n_{M}(\bx)}{(2\pi)^{3/2} v_{th}^3 m}\exp\left[-\frac{(v_\shortparallel-u_{\shortparallel}(\bx))^2}{2v_{th}(\bx)^2}-\frac{\mu|\Bex(\bx)|}{m v_{th}(\bx)^2}\right],
\end{equation}
normalized so that $\int f_{M} \Bexm \dd \vpar \dd \mu = n_{M}$. 

Then replacing the actual particle distribution function $f$ by the Maxwellian in the expressions above,
except for the $K_1$ term,
we obtain
\begin{align}
	\mathbf{P} (t,\bx) &=  \frac{m \, n_M}{\Bexm^2}(\bE_\perp + u_{\shortparallel,M} \bex\times \bB), \label{eq:Plin}\\ 
	\mathbf{M} (t,\bx) &=   - \int \mu\bex f \Bpstar  \dd \vpar \dd \mu
	+ \frac{n_M}{\Bexm^2}  m u_{\shortparallel,M} \bE\times\bex . \label{eq:Mlin}  
\end{align}
Notice that the terms in $\bB_\perp$ cancel when integrating over $\vpar$ and $\mu$. Although Eq. \eqref{eq:Plin} is derived for the shifted-Maxwellian distribution, the result holds for any distribution, provided that the parallel flow is defined as $n u_\parallel = \int v_\parallel  f \Bpstar  \dd \vpar \dd \mu$. In the gyrokinetic theory, the polarization current is defined as the time derivative of the polarization vector, $\bJ_{\text{pol}}=\partial \mathbf{P}/\partial t$. The first term on the right-hand side of Eq. \eqref{eq:Plin} yields the polarization current $\bJ_{\text{pol}}= \sum_s \frac{m \, n_M}{\Bexm^2}(\partial \bE_\perp/\partial t) $. This polarization current can also be derived directly from the polarization drift of ions and electrons. The polarization drift velocity is given by $v_{\text{pol}}=1/(\omega_{cs}B)(\partial \bE_\perp/\partial t) = m_s/(q_sB^2)(\partial \bE_\perp/\partial t)$ which generates the corresponding current $\bJ_{\text{pol}}=en_{Mi}v_{\text{pol},i}-en_{Me}v_{\text{pol},e}=\sum_s \frac{m_s \, n_M}{\Bexm^2}(\partial \bE_\perp/\partial t)$ \cite{brizard2007foundations}. To our knowledge, the second term in Eq. \eqref{eq:Plin} is typically not considered in gyrokinetic simulations. We can estimate its contribution. The Faraday's law gives $B/E \sim k/\omega \sim 1/v_{\text{phase}}$, where $v_{\text{phase}}$ is the phase velocity of the wave. For the Alfv\'en waves, we have $v_{\text{phase}} \sim V_{A}$, and the ratio of the second term to the first term is typically $u_{\parallel}/V_{A} \ll 1$. Thus the second term is negligible under most conditions, but it may become significant in the presence of supersonic flows.

In the special case of a centered Maxwellian, we obtain
\begin{align}
	\mathbf{P} (t,\bx)&=  \frac{m \, n_M(\bx) \bE_\perp(t,\bx) }{|\Bex(\bx)|^2}, \label{eq:Plin-cent} \\
	\mathbf{M} (t,\bx) &=  - \bex \int \mu f \Bpstar  \dd \vpar \dd \mu . \label{eq:Mlin-cent}  
\end{align}
We shall consider only this case in the sequel of the paper.

We introduce the  Alfv\'en velocity of the particle species
\begin{equation}
    V_{A}(\bx) = \frac{|\Bex|}{\sqrt{\mu_0  m n_M}}
\end{equation}
to express the polarization of a given particle species Eq. \eqref{eq:Plin-cent} by
\begin{equation}\label{eq:polarizationAlfven}
	\boldsymbol{P} (t,\bx) = \frac{1}{\mu_0V_{A}(\bx)^2}\bE_\perp(t,\bx).
\end{equation}

Finally, we define the displacement field $\bD$ and the magnetic field intensity $\bH$ 
for several drift-kinetic species characterized by their polarization $\boldsymbol{P}$ given by Eq. \eqref{eq:polarizationAlfven} and magnetization $\boldsymbol{M}$ given by Eq. \eqref{eq:Mlin-cent}.
\begin{align}
	\bD &= \epsilon_0 \bE + \sum_s\boldsymbol{P}_s = \epsilon_0 \left(\bE + \sum_s \frac{c^2}{V_{A,s}^2}\bE_\perp  \right),\label{def:Dlin-cent}\\
	\bH &= \frac{1}{\mu_0}(\Bex+\bB) - \sum_s\boldsymbol{M}_s = 
	\frac{1}{\mu_0}(\Bex+\bB) + \sum_s\left( \int \mu f_s \Bpstar_s  \dd \vpar \dd \mu\right) \bex. \label{def:Hlin-cent}
\end{align}
We observe that, in the drift-kinetic model, the vacuum dielectric constant is not $\epsilon_0$ but rather a large tensorial quantity, $\epsilon_\perp(\boldsymbol{k})\epsilon_0$, that reflects the physics of the polarization drift intrinsically built in the definition of a gyrokinetic (GK) plasma. This corresponds to the dielectric constant of the "gyrokinetic vacuum" introduced in \cite{krommes1993dielectric}.

\begin{remark}
	In practical applications it is useful to estimate the magnitude of the different terms defining the $\bD$ and $\bH$. Let us consider a Tokamak with $B_{ext} = 3\,T$ and $n_{M,e}= 10^{20} \, m^{-3}$.
	Then with the physical parameters $m_e= 9.1095 \times 10^{-31} \, kg$, $\epsilon_0 = 8.85 \times 10^{-12} \, F/m$, $\mu_0 = 1.2566 \times 10^{-6} \, H / m$. 
	We have 
	$$\frac{ m_e n_{M,e}(\bx)}{|\Bex(\bx)|^2} = 1.01 \times 10^{-11} \, F/m, ~~~ V_{A,e} = 2.8 \times 10^8 m/s. $$
	This means that in a typical tokamak the electron polarization is of the same order of magnitude as $\epsilon_0$ and the electrons' Alfv\'en velocity is close to the speed of light. From a numerical point of view, the CFL is not reduced when neglecting $\epsilon_0 \bE$. \\ 
	For comparison, as $m_i = 1.6726 \times 10^{-27} \, kg$, we have
	$$\frac{ m_i n_{M,i}(\bx)}{|\Bex(\bx)|^2} = 1.85 \times 10^{-8} \, F/m, ~~~ V_{A,i} = 6.5 \times 10^6 m/s. $$
\end{remark}

On the other hand, with this linearization, the $K_2$ terms do not enter anymore in the particles' equations of motion, which we shall compute explicitly for this case. As $K_0+K_1$ does not depend on $\bE$ we have, decomposing $\bB$ into $\bB=\Bpar\bex + \Bperp$
\begin{equation}\label{eq:dkdX}
\fract{}{\bX}(K_0+K_1)= \mu (\nabla\Bexm +  \nabla(\bex\cdot \bB)) = \mu (\nabla\Bexm +  \nabla\Bpar).
\end{equation}
Moreover, it holds
\begin{equation}\label{eq:dkdvpar}
	\fracp{}{\vpar}(K_0+K_1)= m\vpar.
\end{equation}
So that the equations of motion \eqref{eq:Xdot-2}-\eqref{eq:Vdot-2} become
\begin{align}
	\fract{\bX}{t}&=\Vpar\frac{\boldsymboltar}{\Bpstar} 
	+ \frac{1}{\Bpstar} \left(\bE\times\bex -\frac {\mu}{q_e} \nabla \Bpartot\times\bex \right) = \vgc ,  \label{eq:Xdot}\\
	\fract{\Vpar}{t}&= \frac {q}{m}\frac{\boldsymboltar}{\Bpstar} \cdot \left( \bE - \frac {\mu}{q} \nabla \Bpartot\right)  = \agc .\label{eq:Vdot}
\end{align}
which is associated to the drift-kinetic Vlasov equation 
\begin{equation}\label{eq:VlasovDK}
    \fracp{f}{t} + \fract{\bX}{t}\cdot\nabla_x f +  \fract{\Vpar}{t}\fracp{f}{\Vpar}=0.
\end{equation}

\section{\label{sec:DKFK}The fully kinetic ions drift-kinetic electrons Vlasov--Maxwell system}

We assume one species of ions with distribution function $f_i$ solving the fully kinetic Vlasov equation
\begin{equation}
    \fracp{f_i}{t} + \bv\cdot\nabla_x f_i +  \frac{q_i}{m_i}(\bE+\bv\times\Btot)\cdot\fracp{f_i}{\bv}=0.
\end{equation}
The related equations of motion for the ions are 
\begin{align}
	\fract{\bX}{t}&=\bV, \label{eq:Xdoti}\\
	\fract{\bV}{t}&= \frac {q_i}{m_i}\left( \bE + \bV\times\Btot\right).\label{eq:Vdoti}
\end{align}

On the other hand, the distribution function $f_e$ of the electrons satisfies the drift-kinetic Vlasov Eq. \eqref{eq:VlasovDK}
\begin{equation}
    \fracp{f_e}{t} + \fract{\bX}{t}\cdot\nabla_x f_e +  \fract{\Vpar}{t}\fracp{f_e}{\Vpar}=0,
\end{equation}
where the characteristics are defined by Eqs. \eqref{eq:Xdot}-\eqref{eq:Vdot}.
The characteristics define the guiding center velocity $\vgc$ of the electrons, and 
the guiding center parallel acceleration of the electrons $\agc$.
We then define the total charge density
\begin{equation}\label{eq:rhotot}
    \rho = q_e \int  f_e \Bpstar \dd \vpar \dd \mu + q_i \int f_i \dd \bv = \rhogc + \rho_i,
    \end{equation}
and total current density
\begin{equation}\label{eq:jtot}
\bJ = q_e \int   \vgc  f_e \Bpstar \dd \vpar \dd \mu + q_i \int f_i \bv \dd \bv = \jgc +\bJ_i,
\end{equation}

The Amp\`ere law written in terms of $\bD$ and $\bH$ is then
\begin{equation}\label{eq:ampereDKFK0}
	\fracp{\bD}{t} -  \nabla\times\bH 
	= - \bJ = -\jgc - \bJ_i,
\end{equation}
with $\jgc = q_e  \int   \vgc  f_e  \Bpstar \dd \vpar \dd \mu$
and the  Gauss law becomes
\begin{equation}\label{eq:GaussDKFK0}
	\nabla\cdot\bD = \rho = \rhogc  + \rho_i
\end{equation}

In addition, the definition of the fields $\bE=-\fracp{\bA}{t}-\nabla\phi$ and $\bB=\nabla\times\bA$ yields Faraday's law $\fracp{\bB}{t}+\nabla\times\bE=0$ and $\nabla\cdot\bB =0$ so that we get the following Maxwell equations for our hybrid fully kinetic and drift-kinetic model
\begin{align}
\fracp{\bD}{t} - \nabla \times \bH &= -(\jgc +\bJ_i) , \label{eq:ampereDKFK}\\
\fracp{\bB}{t} + \nabla \times \bE  &= 0,\label{eq:faradayDKFK}\\
\nabla \cdot \bD &= \rhogc + \rho_i , \label{eq:GaussDKFK}\\
\nabla \cdot  \bB &= 0.
\end{align}
Equations \eqref{def:Dlin-cent}--\eqref{def:Hlin-cent} with only the electrons as a drift-kinetic species relate $\bD$ and $\bE$ on the one hand, and $\bH$ and $\bB$.

Finally, to complete our model,  the ions equations of motion are given by Eqs. \eqref{eq:Xdoti}--\eqref{eq:Vdoti} and the electrons equations of motion are given by Eqs. \eqref{eq:Xdot}--\eqref{eq:Vdot}.

\section{\label{sec:mimeticFD}Discretization with Mimetic Finite Differences}

This section introduces our geometric discretization of the model. We follow the method developed in \cite{Kormann2024A-Dual-Grid} for the fully kinetic Vlasov--Maxwell model. In the following subsection, we review the field discretization by Mimetic Finite Differences following the notation in \cite{Kormann2024A-Dual-Grid}. Then, we will the discretize the action of the drift-kinetic system and derive the discrete equations of motion by a discrete action principle.

\subsection{Mimetic Finite Differences}

The main idea is that the unknowns for the potentials, electric, and magnetic fields have different physical meanings, which results in a different discretization on a given grid: 
Potentials are naturally evaluated at points, the action of a force is measured through its circulation along a path, a  current is the flux through a surface of a current density,
a total charge in a volume is integral over volume of the charge density.
These kinds of physical quantifies are discretized accordingly in the Mimetic Finite Differences framework. More precisely potentials will be approximated by point values, corresponding to standard Finite Difference, whereas in a non standard manner, quantities related to a force, like the electric field $\bE$ have integrals over the edge of the mesh as a discrete representation, quantities related to a current like $\bJ$ and also $\bB$ are approximated by face integrals, and charge densities like $\rho$ are approximated by cell integrals on the mesh.
Figure \ref{fig:CellComplexes} shows the corresponding degrees of freedom. On the left, the vertices of the grid are used to discretize the potentials, then the edges of the mesh in each direction will be used for the three components of for example $\bE$, the fluxes through the faces orthogonal to each direction, will be used to discretize the three components of $\bB$ and densities will be discretized as integrals over each cell of the mesh.

These definitions of the approximate quantities allow one to write an exact discrete form of Faraday's equation in terms of $\bE$ and $\bB$. 
Indeed, using the Stokes theorem on each face $\mathbf{S}$ of the grid, whose boundary $\partial S$ consists of 4 edges, and on each cell $\mathbf{V}$ of the grid, whose boundary $\partial V$ consists of 6 faces, for $\nabla\cdot\bB=0$, we find:
\begin{align}
    \oint_{\partial \mathbf{S}}  \mathbf{E}\cdot \dd \mathbf{\ell} &=  \int_\mathbf{S} \left( -\fracp{ \mathbf{B}}{t} \right)\cdot \dd \mathbf{S}, \\
    \oint_{\partial \mathbf{V}} \mathbf{B}  \cdot \dd \mathbf{S} &= 0.
\end{align}
In order to apply Stokes' formula also for the Ampere and Poisson equations, we need to have the displacement field  $\bD$ discretized by fluxes through faces and the magnetic field $\bH$ by edge integrals. This corresponds to different set of unknowns that will defined on the dual grid whose vertices are the centers of the primal grid cells. The same procedure applied on the dual grid, with $\tilde{\mathbf{S}}$ and $\tilde{\mathbf{V}}$ now being the faces and cells of the dual grid, then yields the exact relations
\begin {align}
    \oint_{\partial \tilde{\mathbf{S}}}  \mathbf{H}\cdot \dd \tilde{\mathbf{\ell}} &=  \int_{\tilde{\mathbf{S}}} \left( \mathbf{J} + \fracp{\mathbf{D}}{t}\right) \cdot \dd \tilde{\mathbf{S}}, \\
    \oint_{\partial \tilde{\mathbf{\mathbf{V}}}} \mathbf{D}  \cdot \dd \tilde{\mathbf{S}} &= \int_{\tilde{\mathbf{V}}} \rho \dd \tilde{\mathbf{V}}. 
\end{align}

\begin{figure}
    \centering
    \includegraphics[width=0.5\linewidth]{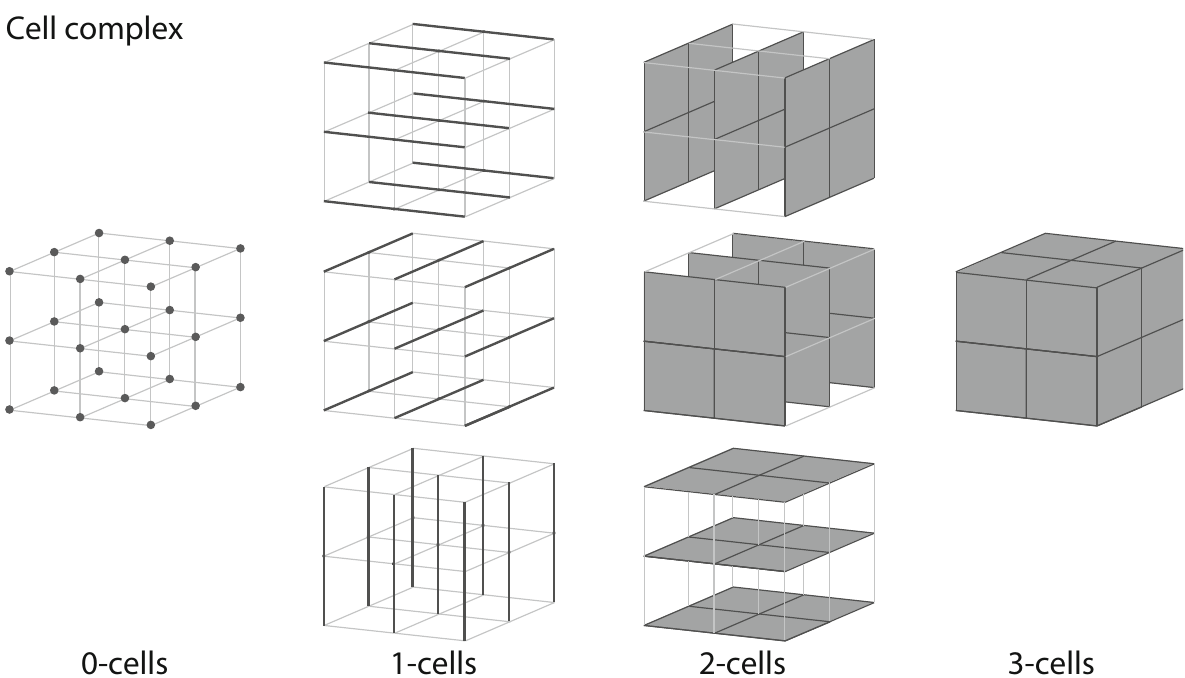}
    \caption{Discrete unknowns on a Cartesian grid}
    \label{fig:CellComplexes}
\end{figure}

We have now two discretizations of the electric field, namely $\bE$ and $\bD$, and of the magnetic field, namely $\bH$ and $\bB$, which are respectively edge integrals and fluxes through faces on the dual and primal grid. 
As one edge of the dual grid goes through exactly one face the primal grid and vice-versa, these unknowns are related to each other by a simple rescaling for a second order discretization. This corresponds to the Yee scheme.
Higher order scheme can also be achieved by a higher order interpolation and projection on the other mesh. The error in the scheme is only coming from this procedure as the Stokes equations are exact.
 
 Let us now formalize this construction:
 We consider two staggered tensor-product grids, a primal with vertices at given positions $(x_i,y_j,z_k)$, $i=1,\ldots,M_1$, $j=1,\ldots,M_2$, $k=1,\ldots,M_3$, where $M_{\ell}$, $\ell=1,2,3$, is the number of grid points in direction $\ell$, and a dual grid with vertices $(x_{i+1/2},y_{j+1/2},z_{k+1/2})$ placed at the center of each cell of the primal grid, i.e., we have e.g.~$x_{i+1/2} = \frac{x_{i+1}+x_i}{2}$.

We use a four-field discretization of Maxwell's equations, where the scalar and vector potentials as well as the electric and magnetic fields are discretized on the primal grid, and a dual grid is introduced for discretizing the $\bD$ and $\bH$ fields. On the primal grid, the scalar potential is defined as a node-based grid function, the vector potential and the electric field $\bE$ are defined as edge-based grid functions, and the magnetic field $\bB$ is defined as a face-based grid function. On the dual grid, $\bH$ is an edge-based grid function, $\bD$ is a face-based grid function as well as the current density $\bJ$, and the charge density $\rho$ is defined as a cell-based grid function. In order to define the degrees of freedom, let us introduce the so-called restriction operators on the primal grid:
\begin{itemize}
	\item $\cR_0$ associates to a scalar function its values at all the vertices of the primal grid:
	$ \cR_{0,(i,j,k)}(\phi) = \phi(x_i,y_j,z_k) =: \arrphi_{i,j,k}.$
	\item $\cR_1$ associates to a vector valued function the circulations on all the edges.  As there are three edges associated to a vertex we define
	$$\cR_{1,(i,j,k)} (\bE) = (\cR_{1,(i,j,k)} ^x(E_x), \cR_{1,(i,j,k)}^y (E_y), \cR_{1,(i,j,k)}^z (E_z))^\top,$$ with
	$\cR_{1,(i,j,k)}^x (E_x) = \int_{x_i}^{x_{i+1}} E_x(x,y_j,z_k) \dd x =: \arrE_{i+\half,j,k} =: \arrE^x_{i,j,k}$ the edge integral of $E_x$ along the $x$ direction, and similarly for the edges in the $y$ and $z$ directions. 
	\item $\cR_2$ associates to a vector valued function the fluxes through all faces.  As there are three faces associated to a vertex we define 
   $$\cR_{2,(i,j,k)} (\bB) = (\cR_{2,(i,j,k)}^x (B_x), \cR_{1,(i,j,k)}^y (B_y), \cR_{2,(i,j,k)}^z (B_z))^\top,$$ with
	$\cR_{2,(i,j,k)}^x (\bB) = \int_{y_j}^{y_{j+1}} \int_{z_{k}}^{z_{k+1}} B_x(x_i,y,z) \dd y \dd z =: \arrB_{i,j+\half,k+\half}=: \arrB^x_{i,j,k}$, the flux through the face orthogonal to the $x$ direction and similarly for the faces orthogonal to the $y$ and $z$ directions.
	\item $\cR_3$ associates to a scalar function its integrals on all the cells of the grid: 
	$\cR_{3,(i,j,k)}(\rho) = \int_{x_i}^{x_{i+1}}\int_{y_j}^{y_{j+1}}\int_{z_k}^{z_{k+1}} \rho(x,y,z) \dd x  \dd y \dd z = \arrrho_{i,j,k}.$
\end{itemize}
The restriction operators for the dual sequence are defined analogously using the vertices, edges, faces, and cells on the dual mesh:
\begin{itemize}
	\item $\tilde{\cR}_0$ associates to a scalar function its values at all the vertices of the dual grid:
	$ \tilde{\cR}_{0,(i,j,k)}(\phi) = \phi(x_{i+\half},y_{j+\half},z_{k+\half}) = \tilde{\arrphi}_{i+\half,j+\half,k+\half}.$
	\item $\tilde{\cR}_1$ associates to a vector valued function the circulations on all the edges of the dual grid. For the edges along the $x$ direction, we define
	$\tilde{\cR}_{1,(i,j,k)}^x (\bH) = \int_{x_{i-\half}}^{x_{i+\half}} H_x(x,y_{j+\half },z_{k+\half}) \dd x =: \tilde{\arrH}_{i,j+\half,k+\half} =: \tilde{\arrH}^x_{i,j,k}$, and similarly for the edges in the $y$ and $z$ directions. 
	\item $\tilde{\cR}_2$ associates to a vector valued function the integrals on all the fluxes through all the faces of the dual grid. For the faces orthogonal to the $x$ direction, we define
	$$\tilde{\cR}_{2,(i,j,k)}^x (\bD) = \int_{y_{j-\half}}^{y_{j+\half}} \int_{z_{k-\half}}^{z_{k+\half}} D_x(x_{i+\half},y,z) \dd z \dd y =: \tilde{\arrD}_{i+\half,j,k} =: \tilde{\arrD}^x_{i,j,k},$$ and similarly for the faces orthogonal to the $y$ and $z$ directions.
	\item $\tilde{\cR}_3$ associates to a scalar function its integrals on all the cells of the dual grid: 
	$\tilde{\cR}_{3,(i,j,k)}(\rho) = \int_{x_{i-\half}}^{x_{i+\half}}\int_{y_{j-\half}}^{y_{j+\half}}\int_{z_{k-\half}}^{z_{k+\half}} \rho(x,y,z) \dd x  \dd y \dd z = \tilde{\arrrho}_{i,j,k}.$
\end{itemize}
We collect the degrees of freedom into a vector in the following form $\arrB = (\arrB^x, \,\arrB^y,\, \arrB^z)^\top$, where the vectors $\arrB^{\alpha}$, $\alpha =x,y,z$ have the elements $\arrB^{\alpha}_{I} = \arrB^{\alpha}_{i,j,k}$ with $I = i + M_1 j + M_1 M_2 k$ and $\arrB^{\alpha}_{i,j,k}$ is as defined above as a component of the restriction operator. The other field vectors $\arrE$, $\tilde{\arrD}$, $\tilde{\arrH}$ are defined correspondingly.
Since we use a four field formulation, we need to map between the operators on the primal to the dual grid and vice versa. In the time stepping algorithm we will use to propagate $\tilde{\arrD}$ and $\arrB$. The mapping operators are called Hodge operators and they can be defined to arbitrary order of accuracy as was  described in \cite{Kormann2024A-Dual-Grid}. However, in this work we will consider only second-order Hodge operators, which then only amount to rescaling the unknowns from edge to face or point to volume. The resulting scheme is then analogous to the classical Yee scheme.

In practice, for mapping respectively a primal edge unknown $\arrF$ to a dual face unknown $\tilde{\arrF}$, a primal face unknown $\arrG$  to a dual edge unknown $\tilde{\arrG}$, and a primal cell unknown $\arrpsi$ to a dual vertex unknown $\tilde{\arrpsi}$, this yields:
\begin{align}\label{eq:hodge_operators}
	\tilde{\arrF}^x &= \frac{\Delta y \Delta z}{\Delta x}\arrF^x, \quad \tilde{\arrF}^y = \frac{\Delta x \Delta z}{\Delta y}\arrF^y, \quad \tilde{\arrF}^z = \frac{\Delta x \Delta y}{\Delta z}\arrF^z, \\
	\tilde{\arrG}^x &= \frac{\Delta x}{\Delta y \Delta z}\arrG^x, \quad \tilde{\arrG}^y = \frac{\Delta y}{\Delta x \Delta z}\arrG^y, \quad \tilde{\arrG}^z = \frac{\Delta z}{\Delta x \Delta y}\arrG^z, \\
	\tilde{\arrpsi} &= \frac{1}{\Delta x \Delta y \Delta z}\arrpsi \;.
\end{align}
From these expressions, we define the diagonal Hodge operators, $\matH_1$, $\matH_2$, and $\matH_3$, stacking the components for the vector fields
\begin{equation}\label{eq:diagonal_hodge_operators}
	\tilde{\arrF} = \matH_1 \arrF, ~~ \tilde{\arrG} = \matH_2 \arrG, ~~ \tilde{\arrpsi} = \matH_3 \arrpsi.
\end{equation}

Accordingly, we shall also approximate our integral restriction operators by the second order midpoint quadrature rule, so that
\begin{align}\label{eq:midpoint_rule}
	\cR_{1,(i,j,k)}^x (F_x) &\approx \Delta x \,F_x(x_{i+\half},y_j,z_k), \\
	\cR_{2,(i,j,k)}^x (G_x) &\approx \Delta y \Delta z \, G_x(x_i,y_{j+\half},z_{k+\half}), \\
	\cR_{3,(i,j,k)}^x (\psi) &\approx \Delta x \Delta y \Delta z \, \psi(x_{i+\half},y_{j+\half},z_{k+\half}) ,
\end{align}
and similarly for the other components and dual grid restrictions.

We observe that on a Cartesian grid with periodic boundary conditions a cell on the dual grid exactly matches a vertex on the primal grid and an edge on the primal grid exactly matches a face on the dual grid, and vice-versa. 
So with these notations for the reduction operators the indices  perfectly match, and we can define the following discrete scalar products that approximate the $L_2$ inner products
\begin{align*}
	\arrpsi\cdot\tilde{\arrpsi} &= \sum_{i,j,k} \arrpsi_{i,j,k} \tilde{\arrpsi}_{i,j,k}, \\
	\arrF\cdot\tilde{\arrF} &= \sum_{i,j,k} \arrF_{i+\half,j,k}\tilde{\arrF}_{i+\half,j,k}
	+ \arrF_{i,j+\half,k}\tilde{\arrF}_{i,j+\half,k} + \arrF_{i,j,k+\half}\tilde{\arrF}_{i,j,k+\half}\\
	&= \arrF^x\cdot\tilde{\arrF}^x + \arrF^y\cdot\tilde{\arrF}^y +\arrF^z\cdot\tilde{\arrF}^z\\
	\arrG\cdot\tilde{\arrG} &= \sum_{i,j,k} \arrG_{i,j+\half,k+\half}\tilde{\arrG}_{i,j+\half,k+\half}
	+ \arrG_{i+\half,j,k+\half}\tilde{\arrG}_{i+\half,j,k+\half} + \arrG_{i+\half,j+\half,k}\tilde{\arrG}_{i+\half,j+\half,k}\\
	&= \arrG^x\cdot\tilde{\arrG}^x + \arrG^y\cdot\tilde{\arrG}^y +\arrG^z\cdot\tilde{\arrG}^z \\
	\arrpsi\cdot\tilde{\arrpsi} &= \sum_{i,j,k} \arrpsi_{i+\half,j+\half,k+\half} \tilde{\arrpsi}_{i+\half,j+\half,k+\half}. 
\end{align*}  

Due to the definition of the degrees of freedom based on the restriction operators the discrete gradient, curl, and divergence operators have a simple form: Let us illustrate this on the example of the $x$-component of a scalar function $\phi$:
\begin{align*}
\cR^x_{1,(i,j,k)}\left(\frac{\dd\phi}{\dd x} \right) &=\int_{x_i}^{x_{i+1}} \frac{\dd\phi}{\dd x} (x,y_j,z_k)\dd x = \phi(x_{i+1},y_j,z_k) - \phi(x_{i},y_j,z_k) 
\\
&= (\matd_{M_1} \arrphi_{:,j,k})_{i} = ( (\matI_{M_3} \otimes \matI_{M_2} \otimes \matd_{M_1}) \arrphi)_{i,j,k}, 
\end{align*}
where $\matI_{M_{\ell}}$ denotes the identity matrix of size $M_{\ell} \times M_{\ell}$, $\ell=1,2,3$, $\otimes$ the Kronecker product of matrices and we define the one-dimensional discrete derivative operator as 
\begin{equation*}
	\mathbb{d}_{M_1} = \begin{pmatrix}
   -1 & 1 & 0 & \ldots & 0 \\
   0 & -1 & 1 & 0  & \\
   \vdots & & \ddots & \ddots & \\
   0 & & & -1 & 1 \\
   1 & 0 & \ldots & 0 & -1
   \end{pmatrix} \in \mathbb{R}^{M_1 \times M_1}.
\end{equation*} 
An analogous calculation on the dual grid yields $\tilde{\cR}^x_{1,(i,j,k)}\left(\frac{\dd\phi}{\dd x} \right) =(\tilde{\mathbb{d}}_{M_1}\tilde{\arrphi})_{i,j,k}$ with the adjoint derivative operator $\tilde{\matd}_{M_1} = - \matd_{M_1}^\top$. Performing similar calculations for all components of $\cR_1$ and for the curl and divergence components, the discrete gradient, curl, and divergence operators on the primal grid are defined as blocks of Kronecker products of the one dimensional derivative and identity matrices: 
\begin{equation} \label{hD0}
\matG = \begin{pmatrix}
\matI_{M_3} \otimes \matI_{M_2} \otimes \matd_{M_1} \\
\matI_{M_3} \otimes \matd_{M_2} \otimes \matI_{M_1} \\
\matd_{M_3} \otimes \matI_{M_2} \otimes \matI_{M_1}
\end{pmatrix},
\end{equation}
\begin{equation}\label{hD12}
\matC = \begin{pmatrix}
\matO_{M} & - \matd_{M_3} \otimes \matI_{M_2} \otimes \matI_{M_1}  & \matI_{M_3} \otimes \matd_{M_2} \otimes \matI_{M_3}\\
\matd_{M_3} \otimes \matI_{M_2} \otimes \matI_{M_1}  & \matO_{M} &  -\matI_{M_3} \otimes \matI_{M_2} \otimes \matd_{M_1}\\
-\matI_{M_3} \otimes \matd_{M_2} \otimes \matI_{M_1}  & \matI_{M_3} \otimes \matI_{M_2} \otimes \matd_{M_1} & \matO_{M} \\
\end{pmatrix}
\,  \text{ and } \,
\matD =  \matG^\top,
\end{equation}
where $\matO_M$ denotes the zero matrix of size $M \times M$ with $M=M_1\cdot M_2 \cdot M_3$. Recall that the degrees of freedom are ordered that the first block of $M$ entries belongs to the $x$-components on all point such that the first column of the curl matrix operates on the $x$-components of the degrees of freedom and the first row corresponds to the first component of the curl operation.
Their adjoint operators used on the dual grid are given as:
\begin{equation}\label{eq:relation_dual_derivative_ops}
\matG^\top = - \tilde{\matD}, \quad \matC^\top = \tilde{\matC}, \quad \matD^\top = - \tilde{\matG}.
\end{equation}

\subsection{The discrete action principle}

In this section, we derive a semi-discrete action principle for the hybrid model, which can contain both fully kinetic and drift-kinetic species. We will follow the derivation in \cite{Kormann2024A-Dual-Grid} where the discrete action principle for the pure fully kinetic case was derived. The fields are discretized based on Mimetic Finite Differences as explained in the previous subsection and use both fully kinetic ($\bX$, $\bV$) and drift-kinetic markers ($\bbX$, $\Vpar$) to represent the distribution function.
We will denote the positions of the drift-kinetic markers by $\bbX_p$ and their velocities by $\Vpar_p$.
Moreover, to each species of drift-kinetic markers, we will associate the corresponding linearized polarization and magnetization fields as defined in Eqs. \eqref{eq:polarizationAlfven} and \eqref{eq:Mlin-cent}.

For the field particle coupling we will use tensor product cardinal B-splines of arbitrary degree $d$ in each direction. The one dimensional B-splines are assumed to be even and integrate to 1. Their support consists of $d+1$ cells. We will denote the spline centered at the marker position by $S_p(\bx)=S(\bx-\bX_p)$.  

These discretizations are inserted into the drift-kinetic Lagrangian Eq. \eqref{eq:action} and the Lagrangian for the fully kinetic model.
The discrete Lagrangian will then consist of four parts
\begin{equation}\label{eq:Lh}
	\mathcal{L} = \mathcal{L}_{\text{kinetic}} + \mathcal{L}_{\text{dk}} + \mathcal{L}_{\text{pol}} + \mathcal{L}_{\text{fields}}
\end{equation}
with
\begin{equation}\label{eq:kinetic}
	\mathcal{L}_{\text{kinetic}} = \sum_{p=1}^{N_p} w_p (m_p \bV_p + q_p (\arr{\Aex}+ \arr{A})) \cdot \tilde{\mathcal{R}}_2  \left( \dot{\bX}_p S_p(\bx)\right) - (\frac 12 m_p \bV_p^2 - q_p \arr{\phi} \cdot \tilde{\mathcal{R}}_3(S_p(\bx))).
\end{equation}
\begin{multline}\label{eq:dk}
	\mathcal{L}_{\text{dk}} = \sum_{p=1}^{\bar{N}_p} w_p
	  \left( m_p \Vpar_p \bex(\bbX_p) \cdot \dot{\bbX}_p  +  q_p (\arr{A}_0 + \arr{A}) \cdot \tilde{\mathcal{R}}_2  \left( \dot{\bbX}_p S_p(\bx)\right)   - q_p\arr{\phi} \cdot \tilde{\mathcal{R}}_3(S_p(\bx)) \right.  \\
	  \left. - \frac{m_p}{2} \Vpar_p^2 - \mu_p \Bexm -\mu_p (\matC\arrA) \cdot \tilde{\mathcal{R}}_1 \left(S_p(\bx)\bex(\bbX_p)\right)\right),
\end{multline}
where $\arr{A}$ and $\arr{\phi}$ represent the arrays of degrees of freedom, $w_p$ is the particle weight.	  
The underlying definition of the electric and magnetic field degrees of freedom are
\begin{equation}\label{eq:eb_def}
	\arr{E} = - \dot{\arr{A}} - \mathbb{G} \arr{\phi}, \quad \arr{B} = \mathbb{C} \arr{A}.
\end{equation}
Moreover,
\begin{equation}
	\mathcal{L}_{\text{fields}} = 	\half \tilde{\arr{D}} \cdot \arr{E} - \half \arr{B}  \cdot  \tilde{\arr{H}} 
\end{equation}
with, using the expressions of polarization and magnetization from Eqs. \eqref{eq:polarizationAlfven} and \eqref{eq:Mlin-cent},
\begin{align}
		\tilde{\arrD} &= \tilde{\cR}_2(\epsilon_0 \bE^R + \boldsymbol{P}^R) = \epsilon_0 \left(\matH_1\arrE + \sum_s
		\tilde{\cR}_2\left(\frac{c^2}{V_{A,s}^2}\bE_\perp^R \right)  \right), \label{eq:polarizationHodge}\\
		\tilde{\arrH} &= \tilde{\cR}_1\left(\frac{1}{\mu_0}(\Bex+\bB^R) - \boldsymbol{M}\right) = \frac{1}{\mu_0}\left(\matH_2(\arrB_{\mathrm{ext}}+\arrB) +
		\sum_{p=1}^{\bar{N}_p} w_p \mu_p \Bpstarp\tilde{\cR}_1\left(S(\bx-\bbX_p) \bex\right)\right)
\end{align}
where the restriction operators are evaluated with the midpoint rule following Eq. \eqref{eq:midpoint_rule}.
In order to apply the restriction operators $\tilde{\cR}_1$ and $\tilde{\cR}_2$, the fields, that we denote by $\bE^R$ and $\bB^R$, are reconstructed from their degrees of freedom in $\arrE$ and $\arrB$ by interpolation and histopolation as explained in details in \cite{Kormann2024A-Dual-Grid}.
The sum for the polarization term is performed only over the drift-kinetic species.

Our time stepping algorithm updates $\tilde{\arrD}$ and $\arrB$. This means in particular that to update $\arrB$ in the Faraday equation which needs $\arrE$,
we will need to compute $\arrE$ from $\tilde{\arrD}$, which means that we need to invert Eq. \eqref{eq:polarizationHodge}. This is straightforward in general.
However, when $\epsilon_0|\bE| \ll |\boldsymbol{P}|$, then only the perpendicular component of the electric field can be determined in this way.

\subsection{The discrete dynamical system}

From this action principle, the equations of motions can be obtained by performing the variations with respect to all the dynamical variables, which yield the corresponding Euler--Lagrange equations $\fract{}{t}\fracp{\mathcal{L}}{\dot{q}} = \nabla_q \mathcal{L}$.

First for the kinetic particles, the Euler--Lagrange equations are computed like in \cite{Kormann2024A-Dual-Grid}:
\begin{align}
	\fract{\bX_p}{t} &= \bV_p \label{eq:varX}\\
	\fract{\bV_p}{t} &= \frac{q_p}{m_p} (\bE^S(\bX_p) + \bV_p\times \bB^S(\bX_p) ).\label{eq:varV}
\end{align}
where $\bE^S(\bX_p) = -\fracp{\bA^S}{t}(\bX_p)-\nabla \phi^S(\bX_p)$ and $\bB^S(\bX_p) = \nabla \times \bA^S(\bX_p)$, and 
\begin{equation*}
	\bA^S(\bX_p) = \begin{pmatrix}
		\arrA^x \cdot \tilde{\cR}_{2}^x(S(\bx-\bX_p)) \\
		\arrA^y \cdot\tilde{\cR}_{2}^y(S(\bx-\bX_p)) \\
		\arrA^z \cdot\tilde{\cR}_{2}^z(S(\bx-\bX_p))
	\end{pmatrix}, \quad 
	\phi^S(\bX_p) = \arrphi  \cdot\tilde{\cR}_{3}(S(\bx-\bX_p)). 
\end{equation*}
Let us notice that the smoothed fields satisfy the important properties 
\begin{align}
	\nabla\phi^S(\bX_p) &= (\matG\arrphi)^x \cdot\tilde{\cR}_{2}^x( S(\bx-\bX_p))
+ (\matG\arrphi)^y \cdot\tilde{\cR}_{2}^y( S(\bx-\bX_p)) + (\matG\arrphi)^z \cdot\tilde{\cR}_{2}^z( S(\bx-\bX_p)), \label{eq:gradphi_by_parts}\\
	\nabla\times \bA^S(\bX_p) &= (\matC\arrA)^x\cdot\tilde{\cR}_{1}^x( S(\bx-\bX_p)) + (\matC\arrA)^y\cdot\tilde{\cR}_{1}^y( S(\bx-\bX_p)) + (\matC\arrA)^z\cdot\tilde{\cR}_{1}^z( S(\bx-\bX_p)),\\
	\nabla\cdot \bB^S(\bX_p) &= \matD\arrB \cdot \tilde{\cR}_{0}(S(\bx-\bX_p)).
\end{align}
These equalities follow from the following computation, that can be done in 1D as we have a tensor product structure 
\begin{multline}
	\fracp{\phi^S}{x}(X_p) = -\sum_i \arrphi_i\int_{x_{i-\frac 12}}^{x_{i+\frac 12}} S'(x-X_p) \dd x  = 
	-\sum_i \arrphi_i (S(x_{i+\frac 12}-X_p)- S(x_{i-\frac 12}-X_p))\\
	= \sum_i(\arrphi_{i+1}-\arrphi_i){S}(x_{i+\frac 12}-X_p)
\end{multline}
by rearranging the sum and assuming periodic boundary conditions. We get the desired result by computing all the needed derivatives in the same way.

We now compute the Euler--Lagrange equations for the guiding centers.
First for $\bbX_p$ we have
\begin{equation}\label{eq:varXgc}
	 m_p \fract{\Vpar_p}{t}\bexp + q_p \boldsymboltarp \times \fract{\bbX_p}{t} = q_p \bE^S(\bbX_p) -\mu_p \nabla\Bpartotp,
\end{equation}
where we denote by $\Bpartotp=\Bexm(\bbX_p) + \bexp\cdot\bB^S(\bbX_p)$.

Now, the Euler--Lagrange equation for $\Vpar_p$ yields
\begin{equation}\label{eq:varVpar}
	\bexp \cdot \fract{\bbX_p}{t} = \Vpar_p.
\end{equation}
In order to decouple $\fract{\Vpar_p}{t}$ and $\fract{\bbX_p}{t}$, we first take the cross product of Eq. \eqref{eq:varXgc} with $\bexp$ to eliminate the first term. This yields
\begin{align}
	\left(\bE^S(\bbX_p) -\frac{\mu_p}{q_p} \nabla\Bpartotp\right)\times \bexp &=(\boldsymboltarp \times \fract{\bbX_p}{t})\times \bexp  \\
	&= (\bexp \cdot\boldsymboltarp) \fract{\bbX_p}{t} - (\bexp \cdot\fract{\bbX_p}{t}) \boldsymboltarp \\
	&= \Bpstarp \fract{\bbX_p}{t} - \Vpar_p \boldsymboltarp
\end{align} 
where we used an algebraic identity in the second expression, and $\Bpstar= \bex \cdot\boldsymboltar$ as well as Eq. \eqref{eq:varVpar} in the last expression. 
On the other hand, taking the dot product of Eq. \eqref{eq:varXgc} with $\boldsymboltarp$ yields
\begin{equation}
	\Bpstarp \fract{\Vpar_p}{t} = \boldsymboltarp \cdot \left(\frac{q_p}{m_p} \bE^S(\bbX_p) -\frac{\mu_p}{m_p} \nabla\Bpartotp\right).
\end{equation}
Summing up, we get the following equations of motion for the guiding centers:
\begin{align}
	\fract{\bbX_p}{t} &= \Vpar_p \frac{\boldsymboltarp}{\Bpstarp}  + \frac{1}{\Bpstarp} \left(\bE^S(\bbX_p) -\frac{\mu_p}{q_p} \nabla\Bpartotp\right)\times \bexp =: \bbV_p \\
	\fract{\Vpar_p}{t} &= \frac{\boldsymboltarp}{\Bpstarp} \cdot \left(\frac{q_p}{m_p} \bE^S(\bbX_p) -\frac{\mu_p}{m_p} \nabla\Bpartotp\right).
\end{align} 

The Euler--Lagrange equation for $\arr{A}$ yields
\begin{equation}\label{eq:AmpereDisc}
	\fract{\tilde{\arr{D}}}{t} - \mathbb{C}^\top \tilde{\arr{H}} = 
	-\sum_{p=1}^{N_p} w_p q_p \tilde{\mathcal{R}}_2  \left( \bV_p S(\bx-\bX_p)\right)
	-\sum_{p=1}^{\bar{N}_p} w_p q_p \tilde{\mathcal{R}}_2  \left(\bbV_p S(\bx-\bbX_p)\right) =:  -\arr{J} -\arr{J}_{gc}
\end{equation}
and the Euler--Lagrange equation for $\arr{\phi}$ yields
\begin{equation}\label{eq:PoissonDisc}
	\mathbb{G}^\top\tilde{\arr{D}} = \sum_{p=1}^{N_p} w_p q_p \tilde{\mathcal{R}}_3  \left( S(\bx-\bX_p)\right) + \sum_{p=1}^{\bar{N}_p} w_p q_p \tilde{\mathcal{R}}_3  \left( S(\bx-\bbX_p)\right) =: \arr{\rho} + \arr{\rho}_{gc},
\end{equation}
where we denote by $\arr{\rho}$ and $\arrJ$ the charge and current density of the kinetic particles, and by $\arr{\rho}_{gc}$ and $\arrJ_{gc}$ the charge and current density of the drift-kinetic particles.

On the other hand, we get from Eq. \eqref{eq:eb_def} the Faraday equation
\begin{equation}\label{eq:FaradayDisc}
	\fract{\arr{B}}{t} + \mathbb{C}\arr{E} = 0
\end{equation}
as well as $\mathbb{D}\arr{B}=0$.

\subsection{Semi-discrete energy conservation}

As the Lagrangian does not depend explicitly on time, the following discrete energy is conserved:
\begin{equation}
	\mathcal{H}_h = \half \tilde{\arr{D}} \cdot \arr{E} + \half \tilde{\arr{H}} \cdot \arr{B} 
	+ \sum_{p=1}^{N_p} w_p
		\frac{m_p}{2} \bV_p^2
	+ \sum_{p=1}^{\bar{N}_p} w_p\left( 
		\frac{m_p}{2} \Vpar_p^2 + \mu_p \Bpartotp
	\right) 
\end{equation}
It is instructive to verify the conservation by explicit computation. First we add the dot product of the Ampere equation \eqref{eq:AmpereDisc} with $\arr{E}$ to the dot product of the Faraday equation \eqref{eq:FaradayDisc} with $\tilde{\arr{H}}$, we find
\begin{equation}
	\fract{}{t}\left(\half \tilde{\arr{D}} \cdot \arr{E} + \half \tilde{\arr{H}} \cdot \arr{B}\right) = - (\arr{J} + \arr{J}_{gc})\cdot \arr{E} = 
	- \sum_{p=1}^{N_p} w_p q_p \bE^S(\bX_p)\cdot \bV_p
	- \sum_{p=1}^{\bar{N}_p} w_p q_p \bE^S(\bbX_p)\cdot \fract{\bbX_p}{t},
\end{equation}
by definition of $\bE^S$. 

Then taking the dot product of  Eq. \eqref{eq:varV} with $m_p \bV_p$ we find
\begin{equation}\label{eq:dtV}	
	\sum_{p=1}^{N_p} w_p m_p \bV_p \cdot \fract{\bV_p}{t} = \sum_{p=1}^{N_p} w_p  q_p \bV_p  \cdot  \bE^S(\bX_p) = \arrJ \cdot \arr{E}.
\end{equation}

We now take the dot product of Eq. \eqref{eq:varXgc} with $q_p \bE^S(\bbX_p)$:
\begin{equation} \label{eq:EdotX}
	q_p \bE^S(\bbX_p) \cdot \fract{\bbX_p}{t} = \Vpar_p \frac{\boldsymboltarp}{\Bpstarp} \cdot q_p \bE^S(\bbX_p) - \frac{\mu_p}{\Bpstarp}  (\nabla\Bpartotp\times \bexp) \cdot \bE^S(\bbX_p) .
\end{equation}
On the other hand multiplying Eq. \eqref{eq:varVpar} by $m_p \Vpar_p$ we find
\begin{equation}\label{eq:dtVpar}
	\half m_p\fract{\Vpar_p}{t} = \Vpar_p \frac{\boldsymboltarp}{\Bpstarp} \cdot \left(q_p \bE^S(\bbX_p) -\mu_p \nabla\Bpartotp\right).
\end{equation}
We also have
\begin{multline}\label{eq:dtmu}
	\fract{\mu_p \Bpartotp}{t} = \mu_p \fract{\bbX_p}{t}\cdot\nabla \Bpartotp \\
	= \Vpar_p \mu_p\frac{\boldsymboltarp}{\Bpstarp}  \cdot\nabla \Bpartotp + \frac{\mu_p}{\Bpstarp} \nabla\Bpartotp\cdot(\bE^S(\bbX_p) \times\bexp).
\end{multline}
Then adding Eqs. \eqref{eq:dtmu} and \eqref{eq:dtVpar} and subtracting Eq. \eqref{eq:EdotX} yields 0. Then summing the particle contributions and multiplying by $w_p$, we can add them to the field's contribution to get the energy conservation.

\begin{remark}
In this computation, we observe in particular that the cancellation needs the smoothed electric field to be consistent with the definition of the current, so that
\begin{equation}
	\arr{J} \cdot \arr{E} = \sum_{p=1}^{N_p} w_p q_p \bE^S(\bX_p)\cdot \bV_p, \text{ and }
	\arr{J}_{gc}\cdot \arr{E} = \sum_{p=1}^{\bar{N}_p} w_p q_p \bE^S(\bbX_p)\cdot \fract{\bbX_p}{t}.
\end{equation} 
The evaluation of the other field quantities at the particle positions can use any approximation without affecting energy conservation.
\end{remark}

\section{\label{sec:timeDisc}Low-storage Runge--Kutta scheme for time discretization}
In time, we discretize the equations based on an explicit Runge--Kutta scheme. Since the amount of data to store the particles in 4D and 6D phase-space is very high, we use a low-storage variant that is optimized to save storage. We use the Williamson (2N) methods \cite{williamson1980lsrk}, which are low-storage Runge--Kutta (LSRK) schemes for solving ordinary differential equations of the form
$$
u' = F(u(t)), \quad u(0) = u_0,
$$
using an $s$-stage approach with minimum memory requirements of only one additional copy. Williamson's method is defined as follows:
\begin{equation}
\begin{aligned}
    & \quad S_1 := u^n \\
    & \quad \text{for } i = 1:s \ \text{do} \\
    & \quad \quad S_2 := A_i S_2 + \Delta t F(S_1) \\
    & \quad \quad S_1 := S_1 + B_i S_2 \\
    & \quad \text{end} \\
    & \quad u^{n+1} = S_1
\end{aligned}    
\end{equation}
The coefficients are given as \( A_1, \dots, A_s \) and \( B_1, \dots, B_s \), where \( A_1 = 0 \). $S_2$ is initialized at the first pass by setting \( A_1\) to zero.
 All second-order, many third-order, and a few fourth-order methods can be can be cast in 2N-storage
format \cite{williamson1980lsrk}. Below, we list several sets of coefficients \( A \) and \( B \):
\begin{itemize}
    \item \textbf{Explicit Euler (1-Stage method):}
    \[
    A = \{0.0\}; \quad B = \{1.0\}.
    \]

    \item \textbf{2-Stage Methods:}
    \begin{itemize}
        \item \textbf{Heun Method:}
        \[
        A = \{0.0, -1.0\}; \quad B = \{1.0, 0.5\}.
        \]
        \item \textbf{Ralston Method:}
        \[
        A = \{0.0, -5/9\}; \quad B = \{2/3, 3/4\}.
        \]
    \end{itemize}

    \item \textbf{3-Stage Method (from Williamson 1980 \cite{williamson1980lsrk}):}
        \[
        A = \{0.0, -5/9, -153/128\}; \quad B = \{1/3, 15/16, 8/15\}.
        \]
    \item \textbf{5-Stage Method (from Carpenter \& Kennedy 1994 \cite{carpenter1994fourth}):}
    \[
    \begin{aligned}
        A & = \{0.0, -0.417890474499852, -1.19215169464268, -1.69778469247153, -1.51418344425716
\};\\  
        B &= \{0.149659021999229, 0.379210312999627, 0.822955029386982,
           0.699450455949122, 0.153057247968152
\}.
    \end{aligned}
    \]
\end{itemize}
The 3-stage method described above is a third-order method, while the 5-stage method is a fourth-order method.

The process for each Runge-Kutta (RK) stage in the Vlasov-Maxwell system consists of sequential steps to update the electromagnetic fields and particle quantities. For each RK stage, the following steps are performed sequentially:
\begin{enumerate}
    \item \textbf{Ampere's Law}: Update the electric displacement field $\tilde{\arrD}_{\text{new}}$ using $\arrH_{\text{old}}, \; \arrJ_{\text{old}}$, $\arrD_{\text{old}}$

    \item \textbf{Push Particles and Deposit Current}:
    \begin{itemize}
        \item Initialize the current density: Set $\arrJ = 0$
        \item Push particles: Update particle positions and velocities ($\arrX_{\text{new}}, \arrV_{\text{new}}$) using $\arrE_{\text{old}}$ and $\boldsymbol{B}_{\text{old}}$
        \item Deposit the current: Deposit the current $\arrV_{\text{new}}$ based on particle updates ($\arrX_{\text{new}}, \arrV_{\text{new}}$)
        \item Synchronize current through a post-particle loop synchronization
    \end{itemize}

    \item \textbf{Faraday's Law}: Update the magnetic field $\tilde{\arrB}_{\text{new}}$ using $\arrE_{\text{old}}$, $\tilde{\arrB}_{\text{old}}$

    \item \textbf{Hodge for $\arrB$ and $\arrH$}: Update the magnetic field intensity $\arrH_{\text{new}}$ using $\arrB_{\text{new}}$

    \item \textbf{Hodge for $\tilde{\arrD}$ and $\arrE$}: Update the electric field $\arrE_{\text{new}}$ using  $\tilde{\arrD}_{\text{new}}$

\end{enumerate}
In Step 2, different models can be chosen to calculate particle motion and current. If fully kinetic models are used for all species, in Steps 4 and 5, the Hodge operators are constant scalings in each component on equidistant grids. If a drift-kinetic model is used for electrons and a fully kinetic model for ions, as noted in Eqs. \eqref{def:Dlin-cent} and \eqref{def:Hlin-cent}, the polarization $\arrP$ and magnetization $\arrM$ are calculated only from the electron species, and we have $\bm D = \epsilon_0 \left(\bm E + \frac{c^2}{V_{A,e}^2}\bm E_\perp  \right)$ for Eq.~\eqref{def:Dlin-cent}. If drift-kinetic models are used for all species, the relation $1/V_A^2 = \sum_s 1/V_{A,s}^2$ can be applied to simplify the calculation. In the following sections, for the initial implementations in a uniform plasma we consider only $\mu=0$ and the magnetization $\arrM=0$.

\section{\label{sec:dispersion}Dispersion relation for the drift-kinetic model}
To compute the dispersion relation, we simplify the model for the case of a slab with a constant and uniform magnetic field $B_{ext}$ in the $z$ direction
and consider only $\mu=0$.
In this case  $\boldsymboltar=\bB +\Bex$ and $\Bpstar=(\bB +\Bex)\cdot \bex = B_z + B_{ext}$ do not depend on $\vpar$ and can be removed from the velocity integral when computing the charge and parallel current densities.
Decomposing $\bX=(X,Y,Z)$ into its components, and writing $\Vpar=V_z$, the characteristics Eqs. \eqref{eq:Xdot}-\eqref{eq:Vdot} become

\begin{align}
	\fract{X}{t}&=\frac{1}{\Bpstar}  \left(V_z B_x + E_y \right),  \label{eq:Xdot_slab}\\
	\fract{Y}{t}&=\frac{1}{\Bpstar}  \left(V_z B_y - E_x\right),  \label{eq:Ydot_slab}\\
	\fract{Z}{t}&=V_z,  \label{eq:Zdot_slab}\\
	\fract{V_z}{t}&= \frac {q_e}{m_e}\frac{\bB\cdot\bE+ B_{ext} E_z}{B_{ext}+B_z}.\label{eq:Vdot_slab}
\end{align}
and the components of the electron current density Eq. \eqref{eq:jgy} become
\begin{align}
	J_{x,gc,e} &=  \frac{1}{\Bpstar} \left(\jgcpar B_x + \rhogc E_y \right) \label{eq:jgcx}\\
	J_{y,gc,e} &=  \frac{1}{\Bpstar} \left(\jgcpar B_y - \rhogc E_x \right) \label{eq:jgcy}\\
	J_{z,gc,e} &=  \jgcpar, \label{eq:jgcz}
\end{align}
where we define
\begin{align}
	\rhogc &= q_e \int f_e \Bpstar \dd \vpar , \\
	\jgcpar &= q_e \int \vpar  f_e \Bpstar \dd \vpar.
\end{align}

We then linearize the drift-kinetic Vlasov equation around an equilibrium field with $B_{ext}=B_0$ a given constant and all the other components of $\bE$ and $\bB$ vanishing. The equilibrium distribution function is the Maxwellian Eq. \eqref{eq:Maxwellian} with no drift, constant density and temperature, \textit{i.e.}
\begin{equation}\label{eq:EqDistribution}
	f_{0}(v_\shortparallel)=\frac{n_{0}}{\sqrt{2\pi} v_{th}}\exp^{-\frac{v_\shortparallel^2}{2v_{th}^2}}.
\end{equation}
which does not depend on $\bx$.
Denoting by $f_1,\bE,\bB$ the perturbations, the corresponding linearized drift-kinetic model reads
\begin{equation}
	\fracp{f_1}{t} + \vpar\fracp{f_1}{z} = -\frac{q}{m} E_z\fracp{f_0}{\vpar} .
\end{equation}
Then, multiplying the Ampere equation by $\mu_0$, taking the time derivative and plugging in the expression we get for $\fracp{\bB}{t}$ from Faraday's equation we get 
\begin{equation}
	\frac{1}{c^2}\frac{\partial^2\bE}{\partial t^2} + \frac{1}{V_{A,e}^2} \frac{\partial^2\bE}{\partial t^2} + \nabla\times \nabla\times \bE = -\mu_0\fracp{\bJ_{gc,1}}{t}
\end{equation}
where $ V_{A,e}^2 = {|\Bex|}^2/{{\mu_0  m_e n_0}}$, the perturbed gyrocenter current is obtained by integrating Eqs. \eqref{eq:jgcx}--\eqref{eq:jgcz} keeping only the background density
\begin{equation}
	\bJ_{gc,1}	= \begin{pmatrix}
		\frac{q_e n_0 E_y}{B_0} \\ -\frac{q_e n_0 E_x}{B_0} \\ J_{\shortparallel,1}
	\end{pmatrix}
\end{equation}
denoting by $J_{\shortparallel,1} = q_e \int \vpar f_1 B_0 \dd \vpar$.
As usual, we assume plane wave solutions of the type $\hat{\bE}\exp(\ii (\boldsymbol{k}\cdot\bx - \omega t))$, and the same for the other quantities,  from which we get, assuming without loss of generality that $\boldsymbol{k}=(\kperp,0,\kpar)$
\begin{align}
\ii(\omega - \kpar \vpar) \hat{f}&= \frac{q_e}{m_e}\hat{E}_z \fracp{f_0}{\vpar} \\
 \omega^2 \left(\frac{1}{c^2} + \frac{1}{V_{A,e}^2} \right)\hat{E}_x - \kpar^2 \hat{E}_x + \kpar\kperp\hat{E}_z  &=  -\ii \omega q_e \mu_0 n_0 \hat{E}_y/B_0 = -\ii \frac{\omega \omega_{ce}}{V_{A,e}^2}  \hat{E}_y \\
 \omega^2 \left(\frac{1}{c^2} + \frac{1}{V_{A,e}^2} \right)\hat{E}_y - (\kpar^2+\kperp^2) \hat{E}_y  &= \ii \omega q_e \mu_0 n_0 \hat{E}_x/B_0 = \ii \frac{\omega \omega_{ce}}{V_{A,e}^2}  \hat{E}_x\\
 \frac{\omega^2}{c^2} \hat{E}_z + \kpar\kperp \hat{E}_x - \kperp^2\hat{E}_z  &= -\ii\omega\mu_0\hat{J}_{\shortparallel,1} \label{eq:J1}
\end{align}
using in particular from Faraday's equation that $\omega\hat{B}_z = \kperp\hat{E}_y$, and $\omega_{ce}=q_eB_0/m_e$ is the electron cyclotron frequency. The last step is to express $\hat{J}_{\shortparallel,1}$ from $\hat{f}$ using the first equation above:
\begin{equation}
	\hat{J}_{\shortparallel,1} = q_e B_0\int \vpar \hat{f} \dd \vpar  = -\ii \frac{q_e^2 n_0 B_0}{m_e}\hat{E}_z \int  \frac{\vpar \fracp{f_0(\vpar)}{\vpar}}{\omega - \kpar \vpar} \dd \vpar 
\end{equation}
These integrals can be expressed with respect to the plasma dispersion function
\begin{equation}
	Z(\zeta) = \frac{1}{\sqrt{\pi}}\int \frac{\ee^{-u^2}}{u - \zeta}\dd u,
\end{equation}
which verifies
\begin{equation}
	Z'(\zeta) = \frac{1}{\sqrt{\pi}}\int \frac{\ee^{-u^2}}{(u - \zeta)^2}\dd u = -2( 1 + \zeta Z(\zeta)).
\end{equation}
This yields
\begin{equation}	
	-\ii\omega\mu_0\hat{J}_{\shortparallel,1} = \frac{\omega^2\omega_{pe}^2}{\kpar^2v_{th}^2c^2} \left( 1 + \frac{\omega}{\sqrt{2} \kpar  v_{th}} Z(\frac{\omega}{\sqrt{2} \kpar  v_{th}})\right) \hat{E}_z
\end{equation}
introducing the electron plasma frequency $\omega_{pe}=\sqrt{n_0 q_e^2/(\epsilon_0m_e)}$. This expression can then be plugged into Eq. \eqref{eq:J1}.

Finally, after multiplying the equations above by $\frac{c^2}{\omega^2}$, the terms of the $3\times 3$ dispersion matrix such that the dispersion relation is $D(\boldsymbol{k},\omega)\hat{\bE}=0$
write
\begin{align}
	D_{xx} &= \left(1+\frac{c^2}{V_{A,e}^2}\right) - \frac{c^2}{\omega^2}\kpar^2 \\
	D_{xy} &= -D_{yx} =  i \frac{ q_e n_M}{\epsilon_0  B_{ext}\omega} = i \frac{c^2 \omega_{ce}}{V_{A,e}^2 \omega} \\
	D_{xz} &= D_{zx} =  \frac{c^2}{\omega^2} \kpar k_\perp \\
	D_{yy} &= \left(1+\frac{c^2}{V_{A,e}^2}\right) - \frac{c^2}{\omega^2}(\kpar^2+\kperp^2) \\
	D_{yz} &= -D_{zy} = 0\\
        D_{zz} &= 1+ \frac{\omega_{pe}^2}{\kpar^2v_{th,e}^2} \left[1+\zeta_e Z(\zeta_e)\right] - \frac{c^2}{\omega^2} k_\perp^2   \label{eq:Dzz}
\end{align}
where $\zeta_e=\frac{\omega}{\sqrt{2} \kpar  v_{th,e}}$. This corresponds to the equation $D(\boldsymbol{k},\omega)\bE= (1+\boldsymbol{\chi})\bE + \frac{c^2}{\omega^2} \boldsymbol{k}\times\boldsymbol{k}\times \bE=0$. From this equation, the susceptibility tensor for drift-kinetic electrons can be readily extracted. When considering additional species, it suffices to add $\boldsymbol{\chi}_i$ to $D(\boldsymbol{k},\omega)$ and $\boldsymbol{\chi}=\sum_s \boldsymbol{\chi}_s$. 
\begin{remark}
	A more general dispersion relation for this model keeping all the terms, in particular the nonlinear polarization and magnetization terms which lead to additional $Z$ functions in the dispersion relation has been derived by Zonta et al. \cite{Zonta2021Dispersion}.
\end{remark}

The susceptibility tensor derived from kinetic theory can be found in \cite{brambilla1998kinetic}. For simplicity, we derived the dispersion relation for the coupled drift-kinetic electrons and fully kinetic ions model in the cold plasma limit in Appendix A.

\section{\label{sec:results}Simulation Results}

\subsection{Verification of drift-kinetic electrons}
To verify the drift-kinetic model, we first test a one-species simulation with only electrons with $v_{th,e} =c$.
When $k_\perp=0$, there are three eigenmodes. When $D_{zz}=0$, $E_z$ can be non-zero. It is the Langmuir wave, electrostatic perturbation is parallel to $\Bex$ and parallel propagating. We initialized a density perturbation with $\rho=1+0.04 \cos(k_z z)$. The perturbation is electrostatic ($\bB \ll \bE$) and parallel to $\Bex$. As shown in Fig.~\ref{fig:landau} with $k_z=0.4$, the damping of $E_z$ is observed, and the results are in good agreement with the analytical solution. We also fit the numerical results to determine the frequency and damping rate of the mode for varying $k_z$ as shown by the dots in Fig.~\ref{fig:scan_kpar}. Finally, we show Fig. \ref{fig:ele_disp} the wave spectrum.
The analytical results, obtained by setting $D_{xx}D_{yy}-D_{yx}D_{xy}=0$ using the expression given in Eq. \eqref{eq:Dzz}, are displayed as lines. 
\begin{figure}[htbp]
    \centering
    \begin{subfigure}[b]{0.33\textwidth}
    \includegraphics[width=\linewidth]{./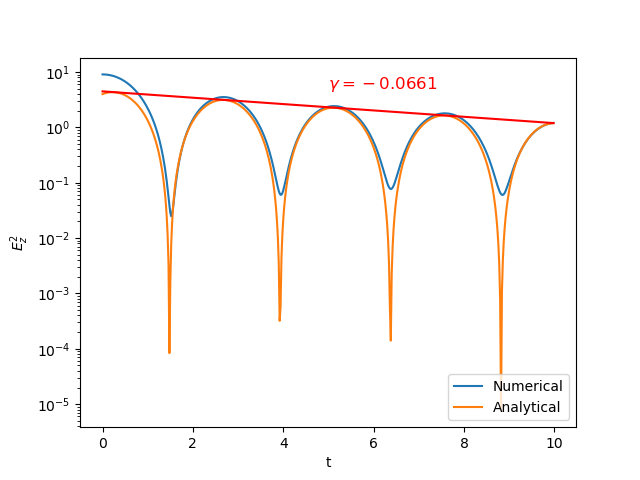} 
    \caption{Electrostatic perturbation.} 
    \label{fig:landau} 
\end{subfigure}
\begin{subfigure}[b]{0.33\textwidth}
    \centering
    \includegraphics[width=\linewidth]{./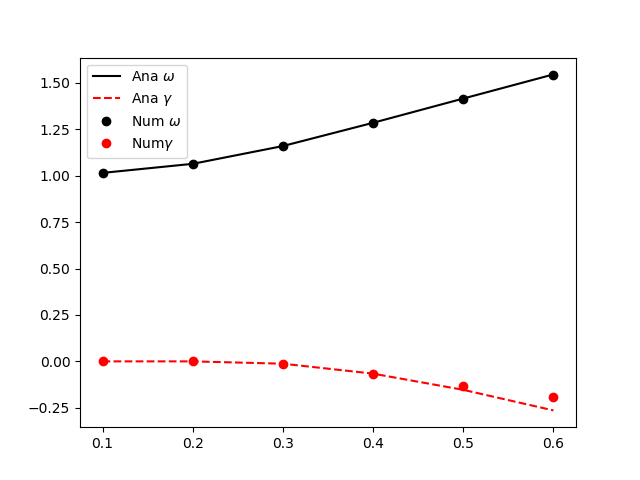} 
    \caption{Scan in $k_z$.} 
    \label{fig:scan_kpar} 
\end{subfigure}
\begin{subfigure}[b]{0.32\textwidth}
    \centering
    \includegraphics[width=\linewidth]{./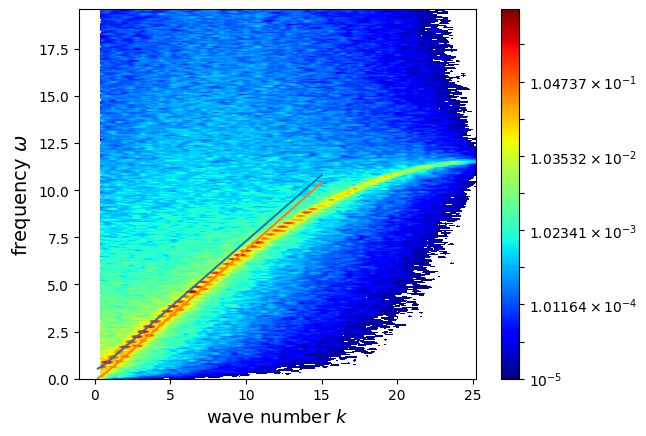} 
    \caption{Wave spectrum} 
    \label{fig:ele_disp} 
\end{subfigure}
\caption{Simulations with only drift-kinetic electrons}
\label{fig:damping}
\end{figure}
We observe two branches of electromagnetic waves propagating along the magnetic field.
The simulation is initialized by the particle noise. 

\subsection{Waves in the Vlasov--Maxwell system} 
To compare the three models: fully kinetic (FK) for both electrons and ions, drift-kinetic electrons fully kinetic ions (Hybrid), and drift-kinetic (DK) for both species, let us consider a two-species simulation with reduced mass ratio $m_i/m_e = 10 $ and $q_i = - q_e$. The initial conditions are Gaussian distributions  with $v_{th,e} = 0.05c$ and $v_{th,i}=0.05c/\sqrt{10}$. We apply a constant exterior field $\Bex$ along the $z$-axis. We use $\omega_{pe}/|\omega_{ce}|=1$. The plasma frequency $\omega_p=\sqrt{\omega_{pe}^2+\omega_{pi}^2}\approx\omega_{pe}$.
To write the dispersion relation and frequencies non-trivially, we assume $m_e\ll m_i$ which yields
\begin{align*}
\omega_{LH}^2 & = {\frac{|\omega_{ce}|  \omega_{ci}}{1 + {\omega_{ce}^2}/{\omega_{pe}^2}}}, ~~~~~ 
\omega_{UH}^2  = {\omega_{pe}^2 + \omega_{ce}^2} \\
\omega_L & = \frac{|\omega_{ce}|}{2} \left( \sqrt{1 + 4\left(\frac{\omega_{pe}}{\omega_{ce}}\right)^2} - 1 \right), ~~~~~
\omega_R  = \frac{|\omega_{ce}|}{2} \left( \sqrt{1 + 4\left(\frac{\omega_{pe}}{\omega_{ce}}\right)^2} + 1 \right),
\end{align*}
where $\omega_{LH}$ and $\omega_{UH}$ are the lower and upper hybrid resonance frequencies, $\omega_{L}$ and $\omega_R$ are the L- and R-wave cutoff frequencies. The frequencies related to the cutoffs and resonance are shown as black dashed lines in the Figs. \ref{fig:fig2} and \ref{fig:fig3} to help identify each branch of the wave. We can find the dispersion relation of the corresponding  modes in the literature \cite{stix1992waves,swanson2020plasma,chen1987waves}.
The O-mode propagates in the direction perpendicular to magnetic field ($\bk \perp \Bex$), with the electric field component $E_z$ parallel to $\Bex$. The X-mode propagates in the direction perpendicular to magnetic field and has two components $E_x$ and $ E_y$  perpendicular to the magnetic field. Their dispersion relations are
$$
\omega^2 = \omega_{p}^2 +c^2k^2 \quad \text{for the ordinary modes (O-mode),}
$$
 and

\[
k^2 = \frac{\omega^2}{c^2} \left[1 - \frac{\omega_{p}^2 \left( \omega^2 - \omega_{p}^2 \right)}{ \omega^2 \left( \omega^2 - \omega_{UH}^2 \right)}\right] \quad \text{for the extraordinary modes (X-mode).}
\]

The Langmuir wave and the L-mode and R-mode propagate in the direction of magnetic field ($\bk \parallel \Bex$). Their dispersion relations are 
$$
\omega^2 = \omega_{p}^2 + 3 k_\parallel^2 v_{the}^2 \quad \text{for the Langmuir wave,}
$$

\[
k^2 = \frac{\omega^2}{c^2} \left[ 1 - \frac{\omega_{p}^2}{(\omega + |\omega_{ce}|)(\omega - \omega_{ci})} \right] \quad \text{for the L-mode,}
\]
 and 

\[
k^2 = \frac{\omega^2}{c^2} \left[ 1 - \frac{\omega_{p}^2}{(\omega - |\omega_{ce}|)(\omega + \omega_{ci})} \right] \quad \text{for the R-mode.}
\]

\subsection{Waves with $\bk$ perpendicular to $\Bex$}
We consider a quasi-one-dimensional simulation with a domain of size 
$
[0, 64 \, d_e] \times [0, d_e] \times [0, d_e]
$ and  a grid of \( 256 \times 8 \times 8 \) points,
where \( d_e = c/\omega_{pe} \) is the electron inertial length.
We use \( 500 \) particles per cell for both species, generated by the quasi-random Sobol sampler. The particle B-spline is of degree  2 in $x$, $y$, and $z$ directions. We employ the 5-stage fourth-order LSRK method with a time step of \( \Delta t = 0.05 \, \omega_{pe}^{-1} \), and the total simulation time is \( T = 200 \, \omega_{pe}^{-1} \).
Figure \ref{fig:fig2} shows the numerical dispersion relation along the $x$ axis of $E_x$, $E_y$, $E_z$ (averaged over $y, z$) for our different models.

The dashed lines representing the analytical results in the cold plasma limitation are obtained by solving the dispersion relations for each model, as detailed in Appendix A.
As shown in Figs. \ref{fig:fig2a}, \ref{fig:fig2d} and \ref{fig:fig2g}, the wave spectrum aligns closely with the analytical results for the X-mode, CAW-X mode, and O-mode. At higher wave numbers, the accuracy of the numerical dispersion relation compared to the analytical results can be further improved by using a higher grid resolution.
The lower X-mode asymptotes to the upper hybrid resonance. When comparing the FK and Hybrid models, the upper X-mode is absent in the Hybrid model. The X-mode in the Hybrid model has a different dispersion relation and cutoff. When $\omega_{pe}/\omega_{ce}$ is smaller (low density) and $m_i/m_e$ is larger, the cutoff frequency approximates to $\omega_L$. And the X-mode in Hybrid mode has no resonance at $\omega_{UH}$, as the electron cyclotron wave is absent.
The transition of compressional Alfv\'en waves (CAW) to the X-mode branch is identical in both the FK and Hybrid models.
In the DK model, even fewer modes exist as shown in the Figs. \ref{fig:fig2c} and \ref{fig:fig2f}.
The CAW does not have resonance at $\omega_{LH}$ as the ion cyclotron effect is absent.
The O-mode, which has a cutoff frequency at $\omega_p$, exists in three models as shown in Figs. \ref{fig:fig2g}, \ref{fig:fig2h} and \ref{fig:fig2i}.
The horizontal lines in the spectrum plot of the FK model corresponding to integer values are the electron Bernstein waves since $\omega_{ce}=1$.
\begin{figure}[htbp]
    \centering
    \begin{subfigure}[b]{0.32\textwidth}
        \centering
        \includegraphics[width=\linewidth]{./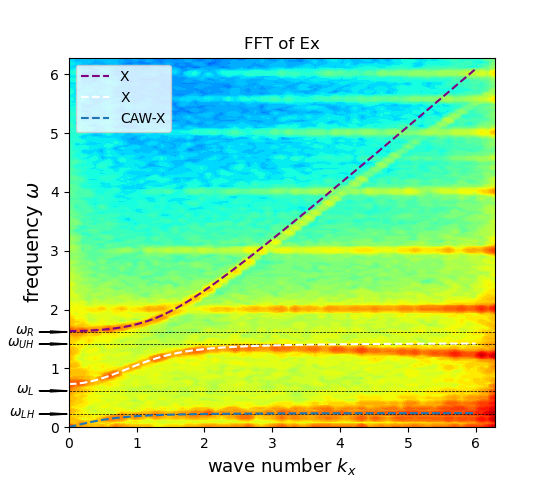}
        \caption{FK. Wave spectrum of $E_x$ vs $k_x$.}
        \label{fig:fig2a}
    \end{subfigure}
    \hspace{2pt} 
    \begin{subfigure}[b]{0.32\textwidth}
        \centering
        \includegraphics[width=\linewidth]{./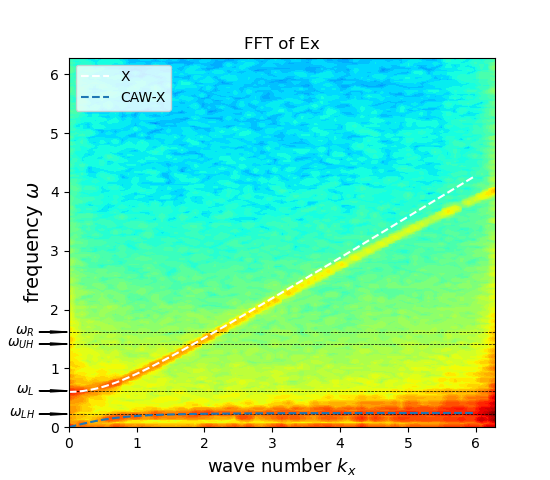}
        \caption{Hybrid. $E_x$ vs $k_x$.}
        \label{fig:fig2b}
    \end{subfigure}
    \hspace{2pt} 
    \begin{subfigure}[b]{0.32\textwidth}
        \centering
        \includegraphics[width=\linewidth]{./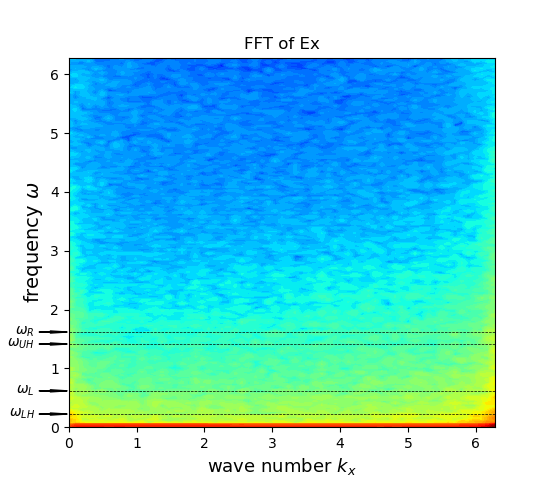}
        \caption{DK. Wave spectrum of $E_x$ vs $k_x$.}
        \label{fig:fig2c}
    \end{subfigure}

    \begin{subfigure}[b]{0.32\textwidth}
        \centering
        \includegraphics[width=\linewidth]{./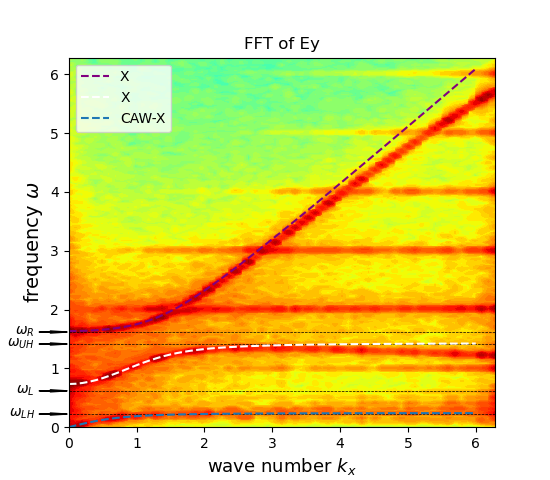}
        \caption{FK. Wave spectrum of $E_y$ vs $k_x$.}
        \label{fig:fig2d}
    \end{subfigure}
    \hspace{2pt} 
    \begin{subfigure}[b]{0.32\textwidth}
        \centering
        \includegraphics[width=\linewidth]{./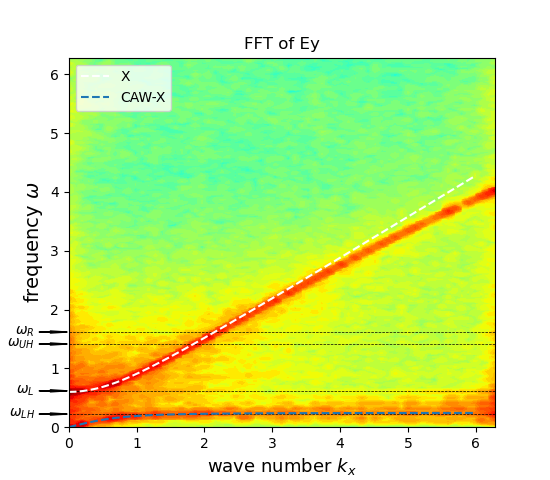}
        \caption{Hybrid. $E_y$ vs $k_x$.}
        \label{fig:fig2e}
    \end{subfigure}
    \hspace{2pt} 
    \begin{subfigure}[b]{0.32\textwidth}
        \centering
        \includegraphics[width=\linewidth]{./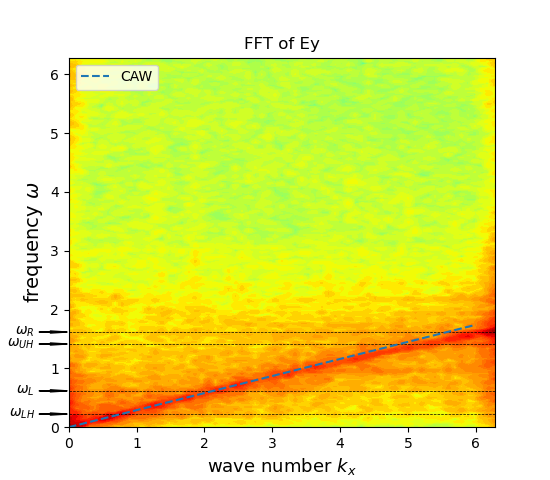}
        \caption{DK. Wave spectrum of $E_y$ vs $k_x$.}
        \label{fig:fig2f}
    \end{subfigure}

    \begin{subfigure}[b]{0.32\textwidth}
        \centering
        \includegraphics[width=\linewidth]{./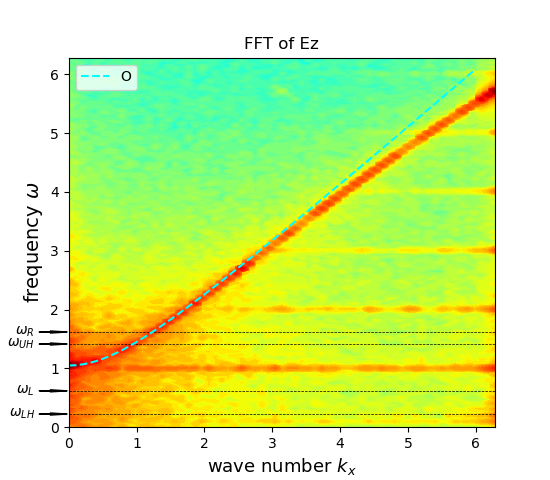}
        \caption{FK. Wave spectrum of $E_z$ vs $k_x$.}
        \label{fig:fig2g}
    \end{subfigure}
    \hspace{2pt} 
    \begin{subfigure}[b]{0.32\textwidth}
        \centering
        \includegraphics[width=\linewidth]{./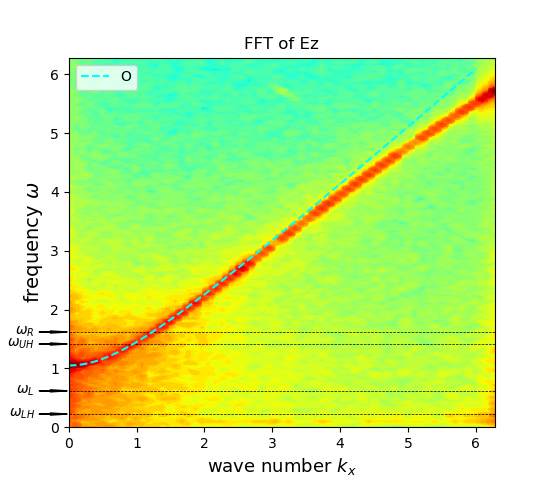}
        \caption{Hybrid. $E_z$ vs $k_x$.}
        \label{fig:fig2h}
    \end{subfigure}
    \hspace{2pt} 
    \begin{subfigure}[b]{0.32\textwidth}
        \centering
        \includegraphics[width=\linewidth]{./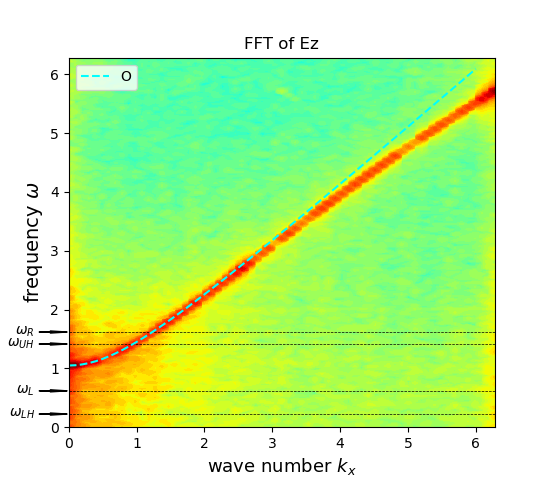}
        \caption{DK. Wave spectrum of $E_z$ vs $k_x$.}
        \label{fig:fig2i}
    \end{subfigure}

    \caption{Comparison of waves with $k$ perpendicular to $\Bex$ for Fully Kinetic and Hybrid models. 
    Left: Fully kinetic for both electrons and ions (FK). 
    Middle: Drift-kinetic electrons with fully kinetic ions (Hybrid). 
    Right: Drift-kinetic for both electrons and ions (DK).}
    \label{fig:fig2}
\end{figure}

\subsection{Waves with $\bk$ parallel to $\Bex$}
We use a domain of size $[0, d_e] \times [0, d_e] \times [0, 64 \, d_e]$ and  a grid of \( 8 \times 8 \times 256 \) points, other parameters are the same. Then we show the numerical dispersion relation along the $z$ axis. The results in the $x$-direction and 
$y$-direction are identical; therefore, we omit the $E_x$-direction wave spectrum in this analysis.

As shown in Fig. \ref{fig:fig3}, the upper R-mode is absent in the Hybrid model compared to the FK model. In Fig. \ref{fig:fig3a}, CAW denotes the compressional Alfv\'en waves and ECW and ICW denote electron and ion cyclotron waves. In the Hybrid model, the CAW does not transit to ECW due to the absence of electron cyclotron resonance as shown in Figs. \ref{fig:fig3a} and \ref{fig:fig3b}. The compressional Alfv\'en waves (CAW) exists resonance near $\omega_{LH}$ at $k$ perpendicular to $B$ and at $\omega_{ci}$ at $k$ parallel to $B$ as shown in Figs. \ref{fig:fig2a}, \ref{fig:fig2d} and Fig. \ref{fig:fig3a}. The CAW-ICW branch is the same in the FK and Hybrid models.
In the DK model, even fewer modes exist as shown in the Fig. \ref{fig:fig3c} and the CAW does not have resonance.
The waves with oscillation in $E_z$ are the same in the three models as shown in Figs. \ref{fig:fig3d}, \ref{fig:fig3e} and \ref{fig:fig3f}. The more accurate dispersion relation for the Langmuir wave are $D_{zz}=0$ in Eq. \eqref{eq:Dzz}, which are damping modes and the damping rate is stronger when $k$ larger as shown in Fig. \ref{fig:scan_kpar}. The resonance frequency in the cold plasma limitation is at $\omega_p$.

As shown in Figs. \ref{fig:fig2} and \ref{fig:fig3}, the Hybrid model is suitable to study the ion cyclotron frequency and low-hybrid waves without modification. When applying the Hybrid model to investigate the lower X-mode, L-mode and upper CAW branches,  the applicable regime should be carefully considered. The DK model is suitable for the low-frequency CAW waves.

\begin{figure}[htbp]
    \centering
    \begin{subfigure}[b]{0.32\textwidth}
        \centering
        \includegraphics[width=\linewidth]{./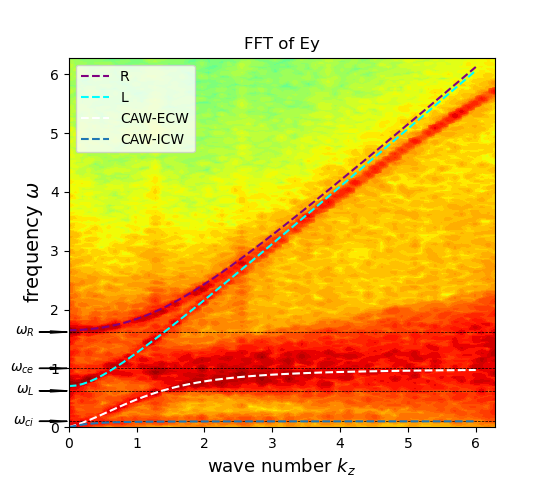}
        \caption{FK. Wave spectrum of $E_y$ vs $k_z$.}
        \label{fig:fig3a}
    \end{subfigure}
    \hspace{2pt} 
    \begin{subfigure}[b]{0.32\textwidth}
        \centering
        \includegraphics[width=\linewidth]{./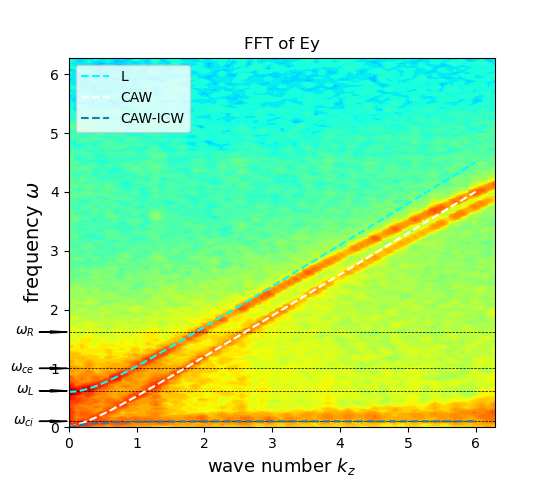}
        \caption{Hybrid. $E_y$ vs $k_z$.}
        \label{fig:fig3b}
    \end{subfigure}
    \hspace{2pt} 
    \begin{subfigure}[b]{0.32\textwidth}
        \centering
        \includegraphics[width=\linewidth]{./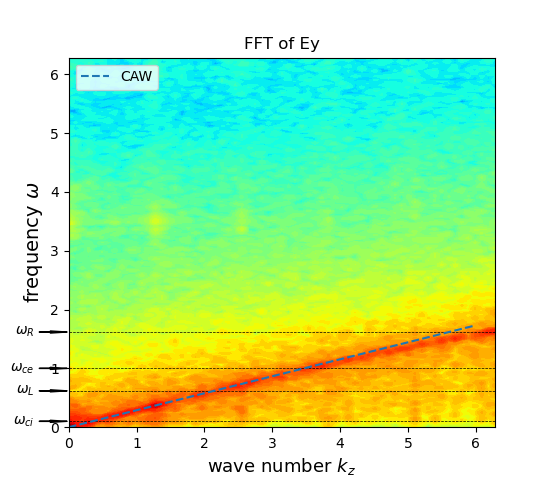}
        \caption{DK. Wave spectrum of $E_y$ vs $k_z$.}
        \label{fig:fig3c}
    \end{subfigure}

    \begin{subfigure}[b]{0.32\textwidth}
        \centering
        \includegraphics[width=\linewidth]{./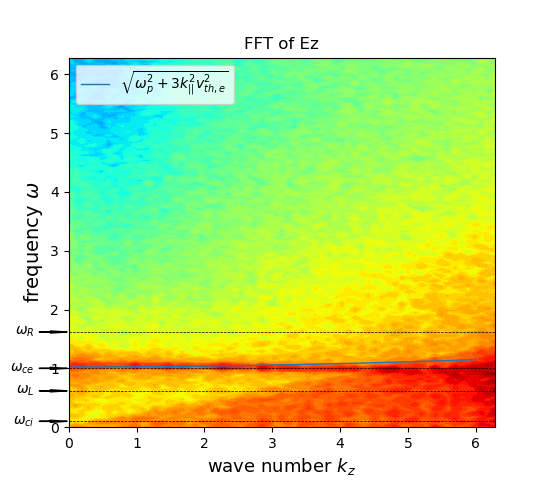}
        \caption{FK. Wave spectrum of $E_z$ vs $k_z$.}
        \label{fig:fig3d}
    \end{subfigure}
    \hspace{2pt} 
    \begin{subfigure}[b]{0.32\textwidth}
        \centering
        \includegraphics[width=\linewidth]{./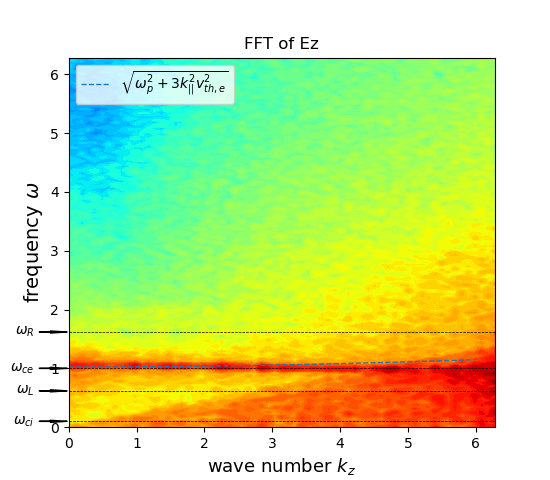}
        \caption{Hybrid. $E_z$ vs $k_z$.}
        \label{fig:fig3e}
    \end{subfigure}
    \hspace{2pt} 
    \begin{subfigure}[b]{0.32\textwidth}
        \centering
        \includegraphics[width=\linewidth]{./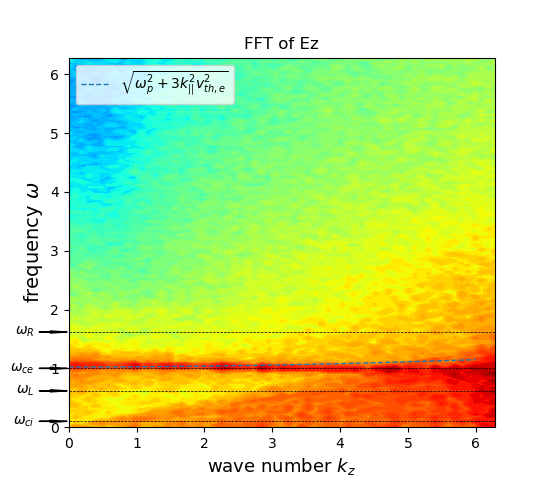}
        \caption{DK. Wave spectrum of $E_z$ vs $k_z$.}
        \label{fig:fig3f}
    \end{subfigure}
    \caption{Compare the waves with $k$ parallel to $\Bex$ for the Fully Kinetic and Hybrid models.
    Left: Fully kinetic for both electrons and ions (FK). 
    Middle: Drift-kinetic electrons with fully kinetic ions (Hybrid). 
    Right: Drift-kinetic for both electrons and ions (DK).}
    \label{fig:fig3}
\end{figure}

\section{\label{sec:conclusion}Conclusions and outlook}

We have derived a new geometric PIC discretization for a gauge-free drift-kinetic model that can directly be combined with a fully kinetic model. As both use only the physical electric and magnetic field and not the potentials, the coupling with the fields can be written as a macroscopic Maxwell system including polarization and magnetization effects coming from the drift-kinetic particles. The next steps that will be addressed in a future work are to include quasi-neutrality in the model. In this case the time derivative of the electric field does not appear anymore in Ampere's law, so that new models will be needed to solve for the parallel electric field, when the ion polarization term remains and the full electric field when this is not present, which is the case when the ions are treated with the full kinetic model.

The model that we derived is fully general and can be applied for real tokamak or stellarator geometry. However, we only implemented and tested it for the moment in slab geometry as discussed in Sections 6 \& 7.
The next step will be to implement the missing terms related to a non constant background magnetic field, aiming applications to edge physics.
Regarding multiscale physics, our formulation naturally integrates different kinetic models within a single consistent framework. The macroscopic Maxwell system in our formulation including polarization and magnetization effects from the drift-kinetic particles, enabling seamless coupling between drift-kinetic and fully kinetic treatments without relying on moment closures. This makes our approach particularly well-suited for studying cross-scale interactions, such as turbulence-driven transport, kinetic instabilities, and wave-particle interactions, which are crucial in edge and core plasma dynamics. We plan to further explore these applications in future studies.

\section*{Acknowledgments}
This work is part of the Eurofusion project TSVV-4.
This work has been carried out within the framework of the EUROfusion Consortium, funded by the European Union via the Euratom Research and Training Programme (Grant Agreement No 101052200 - EUROfusion). Views and opinions expressed are however those of the author(s) only and do not necessarily reflect those of the European Union or the European Commission. Neither the European Union nor the European Commission can be held responsible for them.

\section*{Appendix A. Dispersion relation of the three models in the cold plasma limitation}\label{ap:CPDR}
The dispersion relation is written as $D(\boldsymbol{k},\omega)\bE= (1+\boldsymbol{\chi})\bE + \frac{c^2}{\omega^2} \boldsymbol{k}\times\boldsymbol{k}\times \bE=0$.
 For the well known cold plasma dispersion relation (CPDR):
\[
\overleftrightarrow{\mathbf{D}}(\boldsymbol{k}, \omega) = 
\begin{bmatrix}
S - n^2 \cos^2\theta & -iD & n^2 \sin\theta \cos\theta \\
iD & S - n^2 & 0 \\
n^2 \sin\theta \cos\theta & 0 & P - n^2 \sin^2\theta
\end{bmatrix}, 
\]
where $n \equiv {c {k}}/{\omega}$, the wave vector $\boldsymbol{k} = (k \sin\theta, 0, k \cos\theta)$.

The corresponding quantities in Stix notation \cite{stix1992waves} are:
\[
S = 1 - \sum_s \frac{\omega_{ps}^2}{\omega^2 - \omega_{cs}^2}, ~~~
D = \sum_s \frac{\omega_{cs} \omega_{ps}^2}{\omega (\omega^2 - \omega_{cs}^2)},
~~~
P = 1 - \sum_s \frac{\omega_{ps}^2}{\omega^2}.
\]
We can replace $\chi_e$ with the $\chi_e$ derived using the drift-kinetic model. When there is only one type of ion in the system besides the electrons, then we can derive in the cold plasma dispersion relation for the drift-kinetic electron fully-kinetic ion (Hybrid) model for which
\[
S = 1 + \frac{\omega_{pe}^2}{\omega_{ce}^2} - \frac{\omega_{pi}^2}{\omega^2 - \omega_{ci}^2}, 
~~~
D = -\frac{\omega_{pe}^2}{\omega \omega_{ce}}+\frac{\omega_{ci} \omega_{pi}^2}{\omega (\omega^2 - \omega_{ci}^2)},
~~~
P = 1 - \frac{\omega_{p}^2}{\omega^2},
\]
where $\omega_{p}^2=\omega_{pe}^2+\omega_{pi}^2$. For the drift-kinetic electrons, there is no resonance at the electron cyclone frequency. 
And for the drift-kinetic model for both electrons and ions (DK),
\[
S = 1 +  \sum_s \frac{\omega_{ps}^2}{\omega_{cs}^2}, 
~~~
D = -\sum_s \frac{\omega_{ps}^2}{\omega \omega_{cs}},
~~~
P = 1 - \frac{\omega_{p}^2}{\omega^2}.
\]
In the equations for $S$, $D$ and $P$, the cyclotron frequency is defined as $\omega_{cs} \equiv q_s B_{\text{ext}} / m_s$. Note that $q_s$ can be either positive or negative. The dispersion relation can be expressed as a polynomial equation, for which established methods can be used to determine all the roots numerically \cite{xie2019bo}. The polynomial function for CPDR \cite{swanson2020plasma} can be written as
\begin{equation}
    c_{10}\omega^{10} - c_8\omega^8 + c_6\omega^6 - c_4\omega^4 + c_2\omega^2 - c_0 = 0,
\end{equation}
where
\begin{align*}
c_0 &= c^4 k^4 \omega_{ce}^4 \omega_{ci}^4 \omega_p^2 \cos^2\theta, \\
c_2 &= c^4 k^4 \big[\omega_p^2 (\omega_{ce}^2 + \omega_{ci}^2 - \omega_{ci} \omega_{ce}) \cos^2\theta 
+ \omega_{ci} \omega_{ce} (\omega_p^2 + \omega_{ci} \omega_{ce}) \big] \\
    \quad & + c^2 k^2 \omega_p^2 \omega_{ci} \omega_{ce} (\omega_p^2 + \omega_{ci} \omega_{ce}) (1 + \cos^2\theta), \\
c_4 &= c^4 k^4 (\omega_{ce}^2 + \omega_{ci}^2 + \omega_p^2) + 2c^2 k^2 (\omega_p^2 + \omega_{ci} \omega_{ce})^2\\
 \quad & +c^2 k^2 \omega_p^2 (\omega_{ce}^2 + \omega_{ci}^2 - \omega_{ci} \omega_{ce})(1 + \cos^2\theta) 
+ \omega_p^2 (\omega_p^2 + \omega_{ci} \omega_{ce})^2, \\
c_6 &= c^4 k^4 + \big(2c^2 k^2 + \omega_p^2 \big)(\omega_{ce}^2 + \omega_{ci}^2 + 2\omega_p^2) 
+ (\omega_p^2 + \omega_{ci} \omega_{ce})^2, \\
c_{8} &= 2c^2 k^2 + \omega_{ce}^2 + \omega_{ci}^2 + 3\omega_p^2, \\
c_{10} &= 1.
\end{align*}
Note that here, similar to the Eq. (2.63) in \cite{swanson2020plasma}, $\omega_{ce}$ has been modified to represent $|\omega_{ce}|$ as a positive value, and $q_i$ is assumed to be $|q_e|=e$. The condition $\omega_{ci} \omega_{pe}^2 - \omega_{ce} \omega_{pi}^2 = 0$ is used to simplify the result. The polynomial is fifth order in $\omega^2$, which generally has five sets of roots.

We can use the same method to derive the coefficients of the Hybrid and DK models. The same symbols and conditions are used for ease of comparison between the dispersion relation functions of Hybrid, FK and DK models. The resulting polynomial of the Hybrid model is of fourth order in $\omega^2$, 
\begin{equation}
    - c_8\omega^8 + c_6\omega^6 - c_4\omega^4 + c_2\omega^2 - c_0 = 0,
\end{equation}
where
\begin{align*}
c_0 &= c^4 k^4 \omega_{ce}^4 \omega_{ci}^2 \omega_p^2 \cos^2\theta, \\
c_2 &= c^4 k^4 \omega_{ce}^2\big[\omega_{pe}^2 (\omega_{ce}^2 - \omega_{ci}^2 ) \cos^2\theta 
+ \omega_{ci} \omega_{ce} (\omega_p^2 + \omega_{ci} \omega_{ce}) \big] \\
    \quad & + c^2 k^2 \omega_p^2 \omega_{ci} \omega_{ce}^3 (\omega_p^2 + \omega_{ci} \omega_{ce}) (1 + \cos^2\theta), \\
c_4 &= c^4 k^4 \omega_{ce}^2 \big[\omega_{ce}^2 + \omega_{pe}^2 ( 1- \cos^2\theta )\big] + 2c^2 k^2 \omega_{ce}^2 (\omega_p^2 + \omega_{ci} \omega_{ce})^2  \\
\quad & + c^2 k^2 \omega_{ce}^2 \big[\omega_{pe}^2 (\omega_{ce}^2-\omega_{ci}^2 ) - \omega_{pi}^2 \omega_{p}^2 \big] (1+\cos^2\theta) \\
\quad & + \omega_p^2 \omega_{ce}^2 (\omega_p^2 + \omega_{ci} \omega_{ce})^2, \\
c_6 &= c^2 k^2 (\omega_{ce}^2 + \omega_{pe}^2) \left[ 2 \omega_{ce}^2 + \omega_{pe}^2 ( 1- \cos^2\theta ) \right] \\
\quad &+ \left[  \omega_{p}^2 (\omega_{p}^2 + \omega_{ce}^2 )^2 
+ \omega_{ce}^2 (\omega_p^2 + \omega_{ci} \omega_{ce})^2 \right], \\
c_{8} &= (\omega_{ce}^2 + \omega_{pe}^2)^2.
\end{align*}
For the DK model the resulting polynomial is of third order in $\omega^2$,
\begin{equation}
     c_6\omega^6 - c_4\omega^4 + c_2\omega^2 - c_0 = 0,
\end{equation}
where
\begin{align*}
c_0 &= c^4 k^4 \omega_{p}^2\omega_{ce}^2 \omega_{ci}^2  \cos^2\theta, \\
c_2 &= c^4 k^4 \omega_{ce}\omega_{ci} \big[\omega_{ce}\omega_{ci} + \omega_{p}^2 ( 1- \cos^2\theta )\big] ,\\
\quad & + c^2 k^2 \omega_{p}^2  \omega_{ce}  \omega_{ci} (\omega_p^2 + \omega_{ci} \omega_{ce}) ( 1 + \cos^2\theta ) , \\
c_4 &= c^2 k^2 \omega_{p}^2 (\omega_p^2 + \omega_{ci} \omega_{ce}) ( 1- \cos^2\theta ) \\
\quad & + 2c^2 k^2 \omega_{ce}\omega_{ci} (\omega_p^2 + \omega_{ci} \omega_{ce}) + \omega_p^2 (\omega_{p}^2 + \omega_{ce}\omega_{ci})^2, \\
c_{6} &= (\omega_{p}^2 + \omega_{ce}\omega_{ci})^2.
\end{align*}
For a detailed discussion of each wave branch, please refer to \cite{swanson2020plasma,chen1987waves} or other foundational books.

\bibliographystyle{iopart-num}
\providecommand{\newblock}{}

\end{document}